\definecolor{Green}{rgb}{0.0, 0.5, 0.0}
\def\BibTeX{{\rm B\kern-.05em{\sc i\kern-.025em b}\kern-.08em
T\kern-.1667em\lower.7ex\hbox{E}\kern-.125emX}}
\newtheorem{thm}{Theorem}
\newtheorem{defn}{Definition}
\newtheorem{rem}{Remark}
\newtheorem{lem}{Lemma}
\newtheorem{prop}{Proposition}
\newtheorem{cor}{Corollary}
\newtheorem*{pf}{Proof}
\newtheorem{assum}{Assumption}
\newif\ifitsdraft
\def\itsdraft{\global\itsdrafttrue}
\begin{document}

\title{How to discretize continuous state-action spaces in Q-learning: A symbolic control approach
%		{\footnotesize \textsuperscript{*}Note: Sub-titles are not captured in Xplore and
%			should not be used}
%		\thanks{Identify applicable funding agency here. If none, delete this.}
}

%	\author{\IEEEauthorblockN{BELAMFEDEL ALAOUI Sadek and SAOUD Adnane}
%		\IEEEauthorblockA{\textit{College of Computing, Université Mohammed VI Polytechnique } \\
%			\textit{Université Mohammed VI Polytechnique, Ben Guerir 43150, Morocco \\
%				sadek.belamfedel, adnane.saoud@um6p.ma}

\author{\IEEEauthorblockN{Sadek Belamfedel Alaoui and Adnane Saoud}
\IEEEauthorblockA{\textit{College of Computing, University Mohammed VI Polytechnic, Ben Guerir 43150, Morocco} \\
sadek.belamfedel, adnane.saoud (at) um6p.ma}
%		\and
%		\IEEEauthorblockN{2\textsuperscript{nd} Given Name Surname}
%		\IEEEauthorblockA{\textit{dept. name of organization (of Aff.)} \\
%			\textit{name of organization (of Aff.)}\\
%			City, Country \\
%			email address or ORCID}
}

\maketitle

\begin{abstract}
Q-learning is widely recognized as an effective approach for synthesizing controllers to achieve specific goals. However, handling challenges posed by continuous state-action spaces remains an ongoing research focus. This paper presents a systematic analysis that highlights a major drawback in space discretization methods. To address this challenge, the paper proposes a symbolic model that represents behavioral relations, such as alternating simulation from abstraction to the controlled system. This relation allows for seamless application of the synthesized controller based on abstraction to the original system. Introducing a novel Q-learning technique for symbolic models, the algorithm yields two Q-tables encoding optimal policies. Theoretical analysis demonstrates that these Q-tables serve as both upper and lower bounds on the Q-values of the original system with continuous spaces. Additionally, the paper explores the correlation between the parameters of the space abstraction and the loss in Q-values.  The resulting algorithm facilitates achieving optimality within an arbitrary accuracy, providing control over the trade-off between accuracy and computational complexity. The obtained results provide valuable insights for selecting appropriate learning parameters and refining the controller. \textcolor{black}{The engineering relevance of the proposed Q-learning based symbolic model is illustrated through two case studies.}
\end{abstract}

\begin{IEEEkeywords}
Q-learning, Symbolic control, Abstraction.
\end{IEEEkeywords}

\section{Introduction}

% \textcolor{blue}{we may
% 	ask about the value loss incurred by the state abstraction, page 58 David thesis}

Q-learning, introduced by Watkins and Dayan \cite{watkins1989learning}, has become a prominent research area in reinforcement learning, garnering significant attention for its ability to autonomously learn optimal policies. Its successful applications span diverse domains, including games, robotics, and control systems, as evidenced by notable works such as \cite{mnih2015human}. The allure of Q-learning lies in its simplicity and effectiveness in tackling problems with discrete state and action spaces. However, real-world scenarios often present challenges that traditional Q-learning approaches struggle to address, particularly when dealing with continuous state-action spaces \cite{mnih2013playing, gu2016continuous, van2020q}.

Continuous state-action spaces are characterized by an infinite number of possible state-action pairs, rendering the explicit tabular representation of states impractical \cite{sutton2018reinforcement}. Consequently, there has been a growing interest in extending Q-learning techniques to handle continuous state-action spaces, as seen in works such as \cite{mnih2013playing, gu2016continuous, van2020q, seyde2022solving}. Among these techniques, space discretization-based methods have been explored in \cite{powell2007approximate, lampton2009multiresolution, lampton2010multi, hoerger2022adaptive}.

Space discretization methods partition continuous state and action spaces into discrete subsets called cells, viewing system trajectories as trajectories of these discrete cells. However, the straightforward approach of space discretization, as described in earlier works \cite{powell2007approximate, lampton2009multiresolution, lampton2010multi, hoerger2022adaptive}, can lead to an undesirable mismatch in the reachable set of points within an action. The state space is divided into a uniform grid, with each point represented by the center of the corresponding cell. The successor state is computed based on the system's difference equation, but only for the center of the current cell. Consequently, the trajectory of a point within the starting cell may not end up in the same cell as its center. Figure \ref{fig:uniformdiscretisation} visually demonstrates this reachable-induced mismatch.
 % Realizing the significant impact of this reachable mismatch on the accuracy of learned Q-values and policies, we are motivated to address this critical issue in space discretization for Q-learning. 
\begin{figure}
	\centering
	\includegraphics[width=0.7\linewidth]{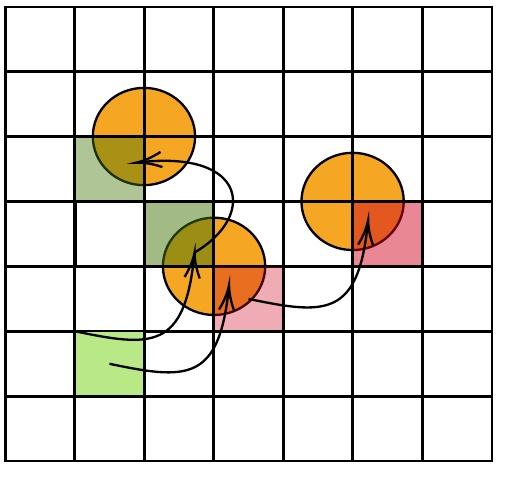}
	\caption{Mismatch between the actual system trajectory and the trajectory obtained by uniform discretization under the same policy $\pi$. The green cells represent the actual transitions whereas the red cells represent the transition captured by the uniform discretisation.}
	\label{fig:uniformdiscretisation}
\end{figure} 
% reachable mismatch in space discretization for

To address the Q-learning problem for continuous state action spaces, this paper presents a novel approach using symbolic abstraction to mitigate underestimation of the reachable set by intersecting the it with multiple discrete cells. This over-approximation introduces conservatism, rendering the discrete system non-deterministic \cite{tabuada2009verification}. Building upon the Q-learning based approach for non-deterministic transition systems proposed in \cite{borri2023reinforcement}, this paper introduces a new Q-learning based discretized reward. As a result of this extension, two Q-tables are generated: the minimal and maximal Q-values, effectively bounding the Q-values of the continuous state-action spaces. The theoretical development establishes conditions under which the minimal and maximal Q-values converge to an optimal Q-values. Additionally, a direct relation between the tightness of these Q-values and the abstraction parameters is established. The analysis demonstrates that under specific learning configurations, the bounds become tighter, leading to a convergence of the extracted policy towards the optimal policy of the continuous state-action space. \textcolor{black}{The effectiveness of the proposed algorithm is evaluated through its application to the mountain car control problem \cite{Moore90efficientmemory} and to the Van Der Pol oscillator.} The evaluation reveals that either the maximally or minimally computed Q-values can effectively control the system. Moreover, the proposed algorithm can achieve any desired precision on the Q-values for continuous state-action spaces, resulting in the convergence of the extracted policy towards the optimal policy of the continuous spaces. Furthermore, the evaluation shows that reducing the distance between quantizer levels leads to tighter bounds and improves the accuracy of the approximation of the optimal Q-values in terms of similarity between the maximally and minimally computed policies.

\section{Related works}

Space discretization in Q-learning is a thriving area of research, driven by the need to adapt Q-learning techniques to handle continuous state-action spaces. To discretize a Q-learning problem, it is necessary to analyze how to construct subgoals for the agent. In this context, \cite{bentivegna2003learning} presents an algorithm that combines Q-learning with a locally weighted learning method to select behavioral primitives and generate subgoals for the agent. Similarly, \cite{csimcsek2005identifying} focuses on identifying subgoals by partitioning local state transition graphs. Quad-Q learning, as developed by \cite{Clausen2000}, contributes to the theory of Q-learning by extending it to a quad-based framework, providing an innovative solution applicable to learning how to partition large intractable problem domains into smaller problems that can be solved independently. Hierarchical Reinforcement Learning (HRL) methods have also been explored to address space discretization challenges. MAXQ, introduced by \cite{dietterich1998maxq}, decomposes a learning problem into a hierarchy of subtasks, each of which is learned using Q-learning. Grid-based discretization methods have been extensively investigated. For example, \cite{powell2007approximate} introduces an approximate dynamic programming approach, utilizing grid-based discretization to address high-dimensional continuous control problems. Additionally, \cite{lampton2009multiresolution} explore a multiresolution approach, where the grid resolution varies based on the attainable subgoals. Building on this approach,  \cite{lampton2010multi} extends the multiresolution technique with pseudo-randomized discretization to enhance its adaptability. The works \cite{sinclair2020adaptive} and \cite{sinclair2022adaptive} revolve around the concept of adaptive discretization, primarily achieved through non-uniform partitioning of the space. This approach involves continuously refining the partition based on the density of the observed samples, offering valuable insights into data-driven approaches for achieving more efficient and effective learning. Moreover, \cite{hoerger2022adaptive} propose an adaptive space partitioning technique that automatically adjusts the granularity of the discretization based on the system's exploration. \textcolor{black}{The work in \cite{delimpaltadakis2023interval} establishes an algorithm to solve value iteration over continuous actions interval Markov Decision Processes that uses maximization problems instead of maximisation - minimisation for efficiency.} In the context of controlling non-deterministic finite transition systems, \cite{borri2023reinforcement} introduced a Q-learning algorithm tailored to address the challenges posed by such systems. Non-deterministic finite transition systems are discrete systems whose transition probabilities are not explicitly defined. However, it is essential to note that the work presented here and the study conducted by \cite{borri2023reinforcement} are distinct in their problem statements and objectives.

%%%%%%%%%%%%%%%%%%%%%%%%%%%%

\section{Preliminaries and problem statement}

\subsection{Notation}
The symbols $\mathbb{N}$, $\mathbb{N}_{> 0}$, $\mathbb{R}$, and $\mathbb{R}_{\geqslant 0}$ represent the sets of natural numbers, strictly positive natural numbers, real numbers, and positive real numbers, respectively. When referring to a set $X$, the notation $2^{X}$ denotes the set of all subsets of $X$. Given any $a \in \mathbb{R},|a|$ denotes the absolute value of $a$. Given any $u=\left(u_{1}, \ldots, u_{n}\right) \in \mathbb{R}^{n}$, the infinity norm of $u$ is defined by $\|u\|=\max\limits_{1 \leqslant i \leqslant n}\left|u_{i}\right|$. Given sets $X$ and $Y$, we denote by $f: X \rightarrow Y$ an ordinary map of $X$ into $Y$. We denote the closed, open, and half-open intervals in $\mathbb{R}$ by $[a, b],(a, b),[a, b)$, and $(a, b]$, respectively. For any set $S \subseteq \mathbb{R}^{n}$ of the form $S=\bigcup_{j=1}^{M} S_{j}$ for some $M \in \mathbb{N}$, where $S_{j}=\prod_{i=1}^{n}\left[c_{i}^{j}, d_{i}^{j}\right] \subseteq \mathbb{R}^{n}$ with $c_{i}^{j}<d_{i}^{j}$, and non-negative constant $\eta \leqslant \tilde{\eta}$, where $\tilde{\eta}=\min\limits_{j=1, \ldots, M} \eta_{S_{j}}$ and $\eta_{S_{j}}=\min \left\{\left|d_{1}^{j}-c_{1}^{j}\right|, \ldots,\left|d_{n}^{j}-c_{n}^{j}\right|\right\}$, we define $[S]_{\eta}=\left\{a \in S \mid a_{i}=k_{i} \eta, k_{i} \in \mathbb{Z}, i=1, \ldots, n\right\}$ if $\eta \neq 0$, and $[S]_{\eta}=S$ if $\eta=0$. The set $[S]_{\eta}$ will be used as a finite approximation of the set $S$ with precision $\eta \neq 0$. Note that $[S]_{\eta} \neq \emptyset$ for any $\eta \leqslant \tilde{\eta}$. $\mathbb{B}= \left\lbrace \textbf{x}\in \mathbb{R}^{n} | \;\; \|\textbf{x}\|_{\infty}\leqslant 1 \right\rbrace $ denotes the unit ball.

%For $a, b \in \mathbb{N}$ and $a \leqslant b$, we use $[a ; b],(a ; b),[a ; b)$, and $(a ; b]$ to denote the corresponding intervals in $\mathbb{N}$.

\subsection{System dynamics}
Consider a control problem in which the system evolves based on a discrete-time dynamical equation defined as follows:
\begin{align}\label{Eq1}
(\Sigma): \xi_{k+1} = f(\xi_{k}, v_{k}),
\end{align}
the environment state is denoted as $\xi_{k}$ and belongs to the continuous space $\mathcal{S}\subseteq \mathbb{R}^{n}$, while the control input is represented by $v_{k}: \mathcal{A}(\xi_{k}) \rightarrow \mathcal{S} $, with $\mathcal{A}(\xi_{k})$ the set of admissible actions that belongs to the set of all actions $\mathcal{A}$, i.e. $\mathcal{A}(\xi_{k})\subseteq \mathcal{A} \subseteq \mathbb{R}^{m}$. The actions are decisions made by a controller at each time step $k\in \mathbb{N}$. The map $f: \mathcal{S}\times \mathcal{A} \to \mathcal{S}$ is assumed to be known and generally nonlinear. Let $\pi = (v_{0}, v_{1}, \ldots) \in \Pi \subseteq 2^{\mathcal{A}}$, with $v_{k} \in \mathcal{A}(\xi_{k})$, represents a potentially infinite sequence of actions. The notation $\phi(k, \xi_0, \pi)$ is used to denote the state reached at time $k$ starting from the initial state $\xi_0$ under the sequence of actions $\pi$. 

In the subsequent analysis, we make the assumption that the map $f$ fulfils the following Lipschitz assumption.
\begin{assum}\label{Assum1}
There exist Lipschitz constants $L_{f\xi}$, $L_{fv}$ such that for any $\xi, \overline{\xi} \in \mathcal{S}$ and $v \in \mathcal{A}(\xi)$, $\overline{v}\in \mathcal{A}(\overline{\xi})$, the following inequality holds:
\begin{align}\label{EqAssum1}
\left\|f(\xi, v) - f(\overline{\xi}, \overline{v})\right\| \leqslant  L_{f\xi}\|\xi - \overline{\xi}\| + L_{fv}\|v -\overline{v}\|
\end{align}
where $ \mathcal{A}(\xi)$ and $ \mathcal{A}(\overline{\xi})$ denote the sets of admissible control inputs at state $ \xi$, $\overline{\xi}$ respectively. 
\end{assum}

The following assumption is used through the paper.
\begin{assum}\label{Compact}
The state and action space $\mathcal{S}$ and $\mathcal{A}$ are compact.
\end{assum}

The following auxiliary lemma will be used to prove the main results of the paper.

\begin{lem}\label{Lemma1}
Under Assumption \ref{Compact}, there exists a positive constant $ L_{\mathcal A}$ such that for every distinct $\xi, \overline{\xi} \in \mathcal{S} $,  $v \in \mathcal{\mathcal{A}(\xi)}$ and $\overline{v} \in \mathcal{A}(\overline{\xi})$ we have: 
\begin{align}\label{EqLemma1}
\| v-\overline{v} \| \leqslant   L_{\mathcal A} \|\xi-\overline{\xi}\|,
\end{align}
\end{lem}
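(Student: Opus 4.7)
The plan is to exploit the two compactness hypotheses in Assumption~\ref{Compact} to manufacture $L_{\mathcal A}$ as the ratio of a uniform upper bound on $\|v-\overline v\|$ to a uniform lower bound on $\|\xi-\overline\xi\|$ over the pairs of interest. The numerator comes from the action space, the denominator from the state space, and the lemma is then an immediate algebraic consequence.

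First I would handle the numerator. Since $\mathcal{A}$ is compact by Assumption~\ref{Compact}, the continuous map $(v,\overline v)\mapsto\|v-\overline v\|$ attains its supremum on the compact product $\mathcal{A}\times\mathcal{A}$, so the diameter $D_{\mathcal A}:=\max_{v,\overline v\in\mathcal A}\|v-\overline v\|$ is finite. Because $\mathcal{A}(\xi),\mathcal{A}(\overline\xi)\subseteq\mathcal{A}$ by the very definition of admissibility, every admissible pair satisfies the uniform upper bound $\|v-\overline v\|\leqslant D_{\mathcal A}$, with no reference yet to the states.

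Next I would turn to the denominator. Compactness of $\mathcal{S}$ must be used to produce a strictly positive constant $\delta>0$ with $\|\xi-\overline\xi\|\geqslant \delta$ for every pair of distinct states that enter the statement, after which setting $L_{\mathcal A}:=D_{\mathcal A}/\delta$ closes the proof in one line via $\|v-\overline v\|\leqslant D_{\mathcal A}= L_{\mathcal A}\delta\leqslant L_{\mathcal A}\|\xi-\overline\xi\|$.

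The main obstacle is precisely this second step: on a non-discrete compact $\mathcal{S}$ there is no universal $\delta>0$ separating arbitrary distinct points, so the argument cannot succeed for literally every $\xi\neq\overline\xi\in\mathcal{S}$. I therefore expect the intended proof to implicitly specialize the pair $(\xi,\overline\xi)$ to the finite quantization grid $[\mathcal{S}]_\eta$ introduced in Section~III.A, where distinct points are separated by at least $\eta>0$; with $\delta:=\eta$ the argument above goes through and gives $L_{\mathcal A}=D_{\mathcal A}/\eta$. A cleaner alternative, should one wish to keep the lemma fully general, would be to phrase the conclusion only for pairs $\xi,\overline\xi$ that lie in such a discrete approximation, or to assume continuity of the set-valued map $\xi\mapsto\mathcal{A}(\xi)$ in the Hausdorff metric and extract a Lipschitz selection; both routes yield the stated constant by the same compactness-plus-ratio mechanism.
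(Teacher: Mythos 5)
Your plan is in fact exactly the route the paper takes: its proof bounds $\|v-\overline v\|$ by a constant $C$ (the diameter of the compact set $\mathcal{A}$), rewrites \eqref{EqLemma1} as $\|v-\overline v\|\leqslant C\,\|\xi-\overline\xi\|\cdot\frac{1}{\|\xi-\overline\xi\|}$, sets $M=\max_{\xi,\overline\xi\in\mathcal{S}}\frac{1}{\|\xi-\overline\xi\|}$, and concludes with $L_{\mathcal A}=CM$. The obstacle you flag in your last paragraph is therefore not a hypothetical: it is precisely the gap in the paper's own argument. The claim that $M$ exists ``since $\xi$ and $\overline\xi$ are distinct and $\mathcal{S}$ is compact'' is false whenever $\mathcal{S}$ is a non-discrete compact set (e.g.\ any interval or product of intervals, which is the setting of the paper): the map $(\xi,\overline\xi)\mapsto 1/\|\xi-\overline\xi\|$ is unbounded on the off-diagonal part of $\mathcal{S}\times\mathcal{S}$, because distinct points can be arbitrarily close. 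Pointwise distinctness does not yield a uniform positive lower bound on $\|\xi-\overline\xi\|$, so no finite $L_{\mathcal A}$ of the form $C M$ exists, and indeed the statement as written is false in general (take $\xi\to\overline\xi$ with $\mathcal{A}(\xi)$ containing two fixed distinct actions). Your diagnosis and your proposed repairs are the right ones: either restrict the pair $(\xi,\overline\xi)$ to a quantized set such as $[\mathcal{S}]_\eta$, where distinct points are separated by at least $\eta$ and the ratio argument goes through with $\delta=\eta$, or replace the lemma by a genuine regularity hypothesis on the set-valued map $\xi\mapsto\mathcal{A}(\xi)$ (e.g.\ Lipschitz continuity in the Hausdorff metric with a Lipschitz selection). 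Note, however, that the lemma is later invoked in the proof of the Lipschitz bound \eqref{Eq19} for arbitrary continuous states $\xi',\overline\xi'\in\mathcal{S}$, so the grid-restriction fix would not suffice there; the selection/Hausdorff-continuity assumption is the repair that keeps the downstream results intact.
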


\ifitsdraft
\begin{pf}
Suppose that $ \mathcal{A}$ is a compact set, then there exist a constant $C$ such that $\| v-\overline{v} \| \leqslant  C $. Let's prove that \eqref{EqLemma1} holds for any distinct $\xi, \overline{\xi} \in \mathcal{S}$ and $v \in \mathcal{A}(\xi)$, $\overline{v} \in \mathcal{A}(\overline{\xi})$. Inequality \eqref{EqLemma1} can be written as $\| v-\overline{v} \| \leqslant  C  \|\xi-\overline{\xi}\|  \frac{1}{\|\xi-\overline{\xi}\|}$.
Let $ M= \max\limits_{\xi,\bar{\xi} \in \mathcal{S}} \frac{1}{\|\xi-\overline{\xi}\|} $ the existence of $M \geq 0$ is guaranteed since $\xi$ and $\bar{\xi}$ are distinct and the state space $\mathcal{S} $ is compact.
%Let $M \geqslant \frac{1}{\|\xi-\overline{\xi}\|}$, which is a strictly positive constant since $\xi$ and $\overline{\xi}$ are distinct.
Thus, $\| v-\overline{v} \| \leqslant  C  M \|\xi-\overline{\xi}\| $ implies that $\| v-\overline{v} \| \leqslant   L_{\mathcal A}  \|\xi-\overline{\xi}\|$, where $ L_{\mathcal A} = C M$ is a positive constant. This completes the proof. \hfill $\square$
\end{pf}

\fi
% \begin{rem}
%     Lemma \ref{Lemma1} applies to all physical systems because of their compact input constraints. Furthermore, if $\xi$ and $\overline{\xi}$ are not distinct, then compactness implies that for every $v, \overline{v} \in \mathcal{A}(\xi), \; \xi \in \mathcal{S} $, the condition $\| v-\overline{v} \| \leqslant  C$ holds. \hfill $\diamond$
% \end{rem}

\subsection{Why earlier discretisation methods are not performing for Q-learning?}

In the context of the control system $\Sigma$, the Q-learning control scheme \cite{watkins1989learning} is employed to learn the Q-values through a sequence of observations, actions, and rewards. To guide the algorithm towards achieving the desired goal, a reward map $g: \mathcal{S}\times \mathcal{A} \to \mathbb{R}$ is associated with each state-action pair. Indeed, at each time step $k \in \mathbb{N}$, the current state $\xi_{k}$ is observed and a decision $v_{k}$ is selected from the admissible set of control inputs $\mathcal{A}\left(\xi_{k}\right)$. After $v_{k}$ is performed, the system goes to a next state $\xi_{k+1}=\xi^{\prime}$. Associated with this state transition, an immediate reward $g(\xi_{k},v_{k})$ is gained. The object of the controller is to find an optimal policy that maximizes $\sum_{i=0}^{\infty} \gamma^{i} g(\xi_{k+i},v_{k+i})$, where $\gamma \in \left(0,1\right)$ is the discount factor. Taken arbitrary policy $\pi$, the $Q$-values are defined by:
\begin{align}\label{update}
q_{\pi}(\xi,a)=g(\xi,a) + \gamma  \max\limits_{b^{\prime}\in\mathcal{A}(\xi^{\prime})}q_{\pi}(\xi^{\prime},b^{\prime}).
\end{align}
The objective in $Q$-Learning is to estimate the $Q$-values for an optimal policy when the reward function is known a priori. The optimal Q-values, denoted $q_{\pi}^{\ast}(\xi, v)$ are generally learned by Algorithm \ref{alg:q_learning}.
\begin{algorithm}
\caption{Q-Learning Algorithm}
\label{alg:q_learning}
\begin{algorithmic}[1]
\STATE Initialize the Q-table, $\alpha$ learning rate and $\gamma$ the discount factor.
\STATE Define the state and action spaces
\FOR{each episode}
\STATE Initialize the current state
\WHILE{episode not finished}
\STATE Choose an action based on the current state and exploration-exploitation strategy
\STATE Apply the action to $\Sigma$ and observe the reward $g(\xi,v)$ and next state $\xi'$
\STATE Update the Q-value for the current state-action pair using the Q-learning update equation:
\begin{equation*}
\resizebox{0.85\hsize}{!}{$Q(\xi, v) \leftarrow Q(\xi, v) + \alpha \left(g(\xi,v) + \gamma \max\limits_{v'\in \mathcal{A}(\xi')} Q(\xi', v')\right)$}
\end{equation*}
\STATE Set the current state to the next state
\ENDWHILE
\ENDFOR
\STATE Extract the learned policy from the Q-table
\end{algorithmic}
\end{algorithm}

The next result from \cite{watkins1989learning} shows under which conditions Algorithm \ref{alg:q_learning} converges.
\begin{thm} \label{watkinsTHM}
Given bounded $\left|g\right|\leqslant  \Re \in \mathbb{R}_{\geqslant 0}$, learning rates $0\leqslant \alpha<1$ and 
$$\sum_{k=1}^{\infty} \alpha_{k}(\xi, a)=\infty, \sum_{k=1}^{\infty}\left[\alpha_{k}(\xi, a)\right]^{2}<\infty,$$
then $q_{\pi}^{(k)}(\xi,a)\rightarrow q^{\star}$ as $k \rightarrow \infty $ $\forall \xi, a$ with probability $1$.
\end{thm}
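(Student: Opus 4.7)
The plan is to recast the Q-learning update as a stochastic-approximation iteration for a contractive Bellman operator and then invoke a standard convergence result for such schemes. First I would define the Bellman optimality operator $T$ acting on bounded Q-functions on $\mathcal{S} \times \mathcal{A}$ by
\begin{equation*}
(TQ)(\xi, a) = g(\xi, a) + \gamma \max_{a' \in \mathcal{A}(f(\xi,a))} Q(f(\xi, a), a').
\end{equation*}
Since $|g| \leqslant \Re$ and $\gamma \in (0,1)$, a standard argument using $|\max_{a'} Q_1 - \max_{a'} Q_2| \leqslant \max_{a'} |Q_1 - Q_2|$ yields $\|TQ_1 - TQ_2\|_\infty \leqslant \gamma \|Q_1 - Q_2\|_\infty$. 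Hence $T$ is a $\gamma$-contraction in sup-norm and, by the Banach fixed point theorem, admits a unique bounded fixed point, which by \eqref{update} coincides with $q^{\star}$.

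Second, I would rewrite the update of Algorithm \ref{alg:q_learning} in stochastic-approximation form
\begin{equation*}
Q_{k+1}(\xi, a) = (1 - \alpha_k(\xi, a))\, Q_k(\xi, a) + \alpha_k(\xi, a)\,\bigl[(TQ_k)(\xi, a) + w_k(\xi, a)\bigr],
\end{equation*}
where $w_k$ is the martingale difference between the sampled Bellman target and its conditional expectation. In the deterministic setting of $\Sigma$ one has $w_k \equiv 0$, so the update is a noisy fixed-point iteration for $T$ evaluated only along visited pairs.

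Third, I would invoke a standard stochastic-approximation lemma for contractions (e.g., Jaakkola-Jordan-Singh, or Bertsekas-Tsitsiklis): provided that $T$ is a sup-norm contraction, the step sizes $\alpha_k(\xi, a)$ satisfy the Robbins-Monro conditions componentwise, and $w_k$ has zero conditional mean with bounded conditional variance, the iterates satisfy $\|Q_k - q^{\star}\|_\infty \to 0$ with probability one, yielding the claimed pointwise convergence $q_\pi^{(k)}(\xi,a) \to q^{\star}$.

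The main obstacle is ensuring the Robbins-Monro summability conditions hold for every pair $(\xi, a)$ rather than only along the realised trajectory; this reduces to showing the behaviour policy visits every state-action pair infinitely often, which is typically guaranteed by a persistent $\varepsilon$-greedy exploration. A secondary technicality is the uniform boundedness of $Q_k$, which follows inductively from $|g| \leqslant \Re$, $\gamma \in (0,1)$ and $\alpha_k \in [0,1)$, and is needed to control $w_k$ before passing to the limit in the contraction lemma.
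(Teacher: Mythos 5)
The paper offers no proof of this theorem at all: it is stated as an imported result, cited directly from Watkins and Dayan, and the surrounding text only uses it to motivate why Algorithm~\ref{alg:q_learning} cannot be applied over continuous spaces. So there is nothing in the paper to compare your argument against line by line. On its own merits, your sketch is the standard modern proof of Q-learning convergence via stochastic approximation for sup-norm contractions (Jaakkola--Jordan--Singh, Tsitsiklis, Bertsekas--Tsitsiklis), which is a genuinely different route from Watkins and Dayan's original ``action-replay process'' construction; the stochastic-approximation route is cleaner and generalizes to other contractive operators, while the original argument is more elementary but ad hoc. Two remarks on your sketch. First, the ``main obstacle'' you flag --- infinite visitation of every pair --- is not really an extra obstacle here: the hypothesis $\sum_{k}\alpha_{k}(\xi,a)=\infty$ for every $(\xi,a)$ already forces each pair to be updated infinitely often, since $\alpha_{k}(\xi,a)$ is zero at steps where $(\xi,a)$ is not visited; what you do still need is the asynchronous version of the convergence lemma, because even with $w_{k}\equiv 0$ the components of $Q_{k}$ are updated at different times, so a plain synchronous contraction iteration does not apply. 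Second, the theorem as stated in the paper quantifies over a continuous $\mathcal{S}\times\mathcal{A}$, for which the tabular argument (and indeed the Robbins--Monro hypothesis itself) only makes sense after restricting to a finite or countable set of state-action pairs; your proof is correct for the tabular setting the theorem is really about, which is also the setting in which the paper invokes it.
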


The problem of applying Algorithm \ref{alg:q_learning} to the system $\Sigma$ relies on the fact that $\xi $ and $ v $ are both defined in a continuous space making updating the Q-values impossible over an infinite number of state action pairs \cite{powell2007approximate,sutton2018reinforcement,seyde2022solving}. To overcome this problem, discretization methods have been proposed to effectively handle infinite state and action spaces. One such method is uniform discretization \cite{powell2007approximate,lampton2009multiresolution,buadicua2022experiments}, which divides the state and action spaces into evenly sized intervals. Another approach is Voronoi discretization \cite{hoerger2022adaptive}, which partition the spaces based on the proximity of the states and actions to specific points. Algorithm \ref{alg:q_learning_discretisation} provides a detailed description of how Q-learning with uniform discretization operates.

%%%%%%%%%%%%%%%%%%%%%%%%%%%%%%%%%%%%%%%%%%%%%%%%%%%%%%%%%%%%%%%%%
It is worth noting that such discretisation methods (uniform and Voronoi) can result in a discretization-induced reachable mismatch, see Figure \ref{fig:uniformdiscretisation}. In these approaches, the state space is divided into a uniform grid, and each point is represented by the centre of the cell to which it belongs. The successor state is computed based on the system's difference equation \eqref{Eq1}, but only for the centre of the current cell. As a consequence, the trajectory of a point within the starting cell may not end up in the same cell as its centre. This mismatch arises because the discretisation method does not take into account the trajectories that cross cell boundaries, causing the actual trajectory to deviate from the idealized trajectory represented by the cell centres. Relying solely on the points $x_{i} \in G $, see Algorithm \ref{alg:q_learning_discretisation} to compute successor states can lead to inaccuracies in the learned Q-values and the resulting policy.

\begin{algorithm}
\caption{Q-Learning with uniform discretisation}
\label{alg:q_learning_discretisation}
\begin{algorithmic}[0]
\STATE \textbf{Input:}  State set: $\mathcal{S}$, Action set: $\mathcal{A}$, Number of partitions for state set: $n$, Number of partitions for action set: $p$
\STATE \textbf{Output:}  Approximated Q-values for the discrete state-action pairs
\STATE \textbf{Partition the state set $\mathcal{S}$ into $n$ disjoint subsets:} $\mathcal{S}_{1}, \mathcal{S}_{2}, ..., \mathcal{S}_{n}$. Ensure that each subset covers the entire state space, i.e., $\mathcal{S} = \cup_{i=1}^{n} \mathcal{S}_i$. 
\STATE \textbf{Select a representative point} $x_i$ \textbf{from each subset} $\mathcal{S}_i$ \textbf{to form a finite aggregated state set} $G = \{x_1, x_2, ..., x_n\}$. Generally, $x_{i}$ is the centres of the cell $i$.  
\STATE \textbf{Partition the action set $\mathcal{A}$ into $p$ disjoint subsets:} $\mathcal{A}_{1}, \mathcal{A}_{2}, \dots, \mathcal{A}_{p}$. Ensure that each subset covers the entire action space, i.e., $\mathcal{A} = \cup_{j=1}^{p} \mathcal{A}_j$.
\STATE \textbf{Select a representative action} $a_j$ \textbf{from each subset} $\mathcal{A}_j$ \textbf{to form a finite aggregated decision set} $H = \{a_1, a_2, ..., a_p\}$.
\STATE \textbf{Initialize the Q-values} for each discrete state-action pair in $G \times H$ as $Q(x, u) = 0$, for all $x \in G$ and $u \in H$.
\FOR{each episode}
\STATE Initialize state $x$
\WHILE{episode not finished}
\STATE Choose action $a \in H $ using an exploration or exploitation strategy (e.g., epsilon-greedy)
\STATE Execute action $a$ and observe next state $\xi'$ and reward $g(x,a)$
\STATE Associate $\xi' \in \mathcal{S}$  with the closest $x' \in G$
\STATE Update the Q-values using the Q-learning update rule:\\
$Q(x, a) \leftarrow Q(x, a) + \alpha(g + \gamma\max\limits_{a'\in H} Q(x', a'))$
\STATE Update current state $x \in G $ by $ x' \in G$
\ENDWHILE
\ENDFOR
\end{algorithmic}
\end{algorithm}

\subsection{Problem Statement}

This paper addresses several key challenges in controlling system \eqref{Eq1} with continuous state and action spaces using Q-learning. The first challenge is to identify a suitable discretization method that effectively captures the essential dynamics of the system. Once a suitable discretization method is identified, the next challenge is to learn the optimal Q-values of the discrete system while evaluating the conservatism introduced by the discretization process. By analysing the parameters of the discretization, the paper aims to gain insight into establishing an arbitrary precision $\varepsilon $ between the resulting discrete Q-values and the Q-values of the original continuous system \eqref{Eq1}. To the best of our knowledge, these research questions represent novel and unexplored points for Q-learning.

\subsection{Solution strategy}

To address the research questions at hand, this paper proposes an abstraction-based Q-learning approach. This approach involves three steps, abstraction, Q-learning and refinement, see Figure \ref{PrincipleAbstraction}. In the "Abstraction", the state-input spaces  is partitioned into intervals called cells. By analysing the reachable set of a group of points belonging to a particular cell, the successor states are identified by the discrete states that intersect with the reachable set. This over-approximation transforms the original deterministic system described by \eqref{Eq1} into a nondeterministic one. Since state transitions are nondeterministic and not probabilstic, the traditional Q-learning technique as described in \cite{jiang1999convergence} cannot be applied. The second step introduces a  novel Q-learning algorithm with two Q-tables that captures the worst-case and best-case scenarios of the control system under a given policy. The convergence and uniqueness of these Q-tables are analyzed to ensure the effectiveness of the proposed approach. Finally, the refinement step permits to extract the optimal policy from the obtained optimal Q-values and apply one of the resultant optimal policies to system \eqref{Eq1}. 
\begin{figure}
\includegraphics[width=0.9\linewidth]{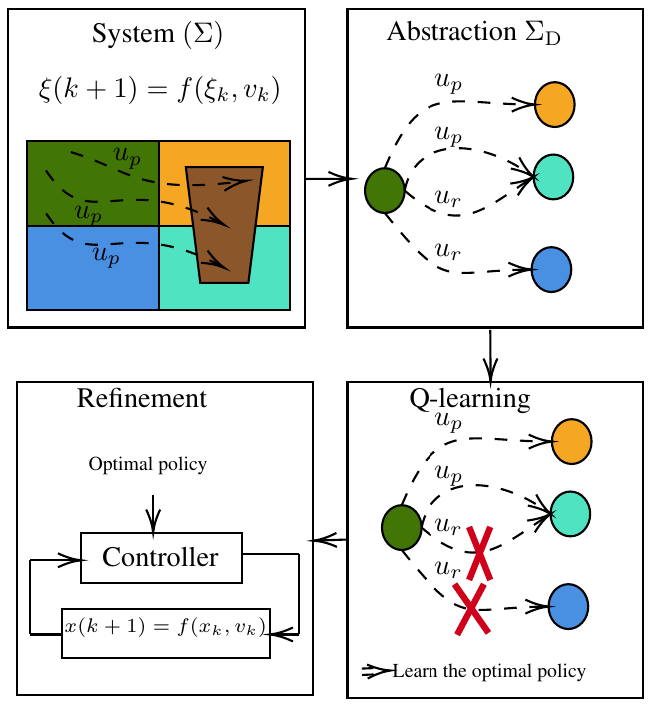}
\caption{Principle of Abstraction based Q-learning.}
\label{PrincipleAbstraction}
\end{figure}

\section{Symbolic models}

Define the notion of \textit{transition systems} adopted from \cite{tabuada2009verification}, which allows to represent the discrete time system \eqref{Eq1} and its abstraction in a unified way.
\begin{defn}
A transition system $\Sigma$ is a tuple $\Sigma =
(\mathcal{S}, \mathcal{A}, F, g, g)$, where $\mathcal{S}$ is a set of states, $\mathcal{A}$ is a set of
inputs, $F : \mathcal{S} \times \mathcal{A} \implies \mathcal{S}$ is a transition relation, $g: \; \mathcal{S} \times \mathcal{A} \rightarrow \mathbb{R} $ a reward function.
\end{defn}
For the  transition system $\Sigma$, the set of enabled input for a state $\xi \in \mathcal{S}$ is formally defined by, 
\begin{align}\label{key7}
v \in \mathcal{A}(\xi) \Longleftrightarrow f(\xi, v) \subseteq \mathcal{S},
\end{align}
and,
\begin{align}\label{key8}
\forall v \in \mathcal{A}(\xi), F(\xi, v)=f(\xi, v).
\end{align}

For transition systems, the notion of Alternating Simulation \cite{tabuada2009verification} represents a formal relationship between the behaviours of two systems. This relationship enables the refinement of a controller that has been synthesized for one system, facilitating effective control over the other system while preserving correctness guarantees.

\begin{defn}
Let $\Sigma_{1} = (\mathcal{S}_{1}, \mathcal{A}_{1}, \Delta_{1}) $ and $\Sigma_{2} = (\mathcal{S}_{2}, \mathcal{A}_{2}, \Delta_{2})$ be metric systems with $Y_1=Y_2$. A relation $\mathcal{R} \subseteq \Sigma_1 \times \Sigma_2 $ is an alternating simulation relation from $\Sigma_{1}$ to $\Sigma_{2}$ if the following two conditions are satisfied:
\begin{enumerate}
\item for every $s_{1} \in \mathcal{S}_{1}$ there exists $s_{2} \in \mathcal{S}_{2}$ with $\left(s_{1}, s_{2}\right) \in \mathcal{R}$;
\item for every $\left(s_1, s_2\right) \in \mathcal{R}$ and for every $a_1 \in \mathcal{A}_{1}(s_{1})$ there exists $a_2 \in \mathcal{A}_{2}\left(s_2\right)$ such that for every $s_2^{\prime} \in \Delta_{2}\left(s_2, a_{2}\right)$ there exists $s_1^{\prime} \in \Delta_{1}(s_{1},a_{1})$ satisfying $\left(s_1^{\prime}, s_2^{\prime}\right) \in \mathcal{R}$.
\end{enumerate}
We say that $\Sigma_{1}$ is alternatingly simulated by $\Sigma_{2}$, denoted by $\Sigma_{1} \preceq_{\mathcal{A S}}^{\varepsilon} \Sigma_{2}$, if there exists an alternating simulation relation from $\Sigma_{1}$ to $\Sigma_{2}$.
\end{defn}

As a space discretisation method, this paper proposes to build the Q-learning upon an abstracted model of the system $\Sigma$. An abstraction of system $\Sigma$ involves partitioning the continuous state and input spaces into a finite number of intervals, called cells, using a quantizer, denoted as $\mathrm{q} = (\eta, \mu)$. The parameter $\eta \in \mathbb{R}_{\geqslant 0}$ (respectively, $\mu \in \mathbb{R}_{\geqslant 0}$) represents the interval spacing that is used to approximate the state space (respectively, the input space). %The union of these intervals provides an approximation of the entire state space. % Additionally, $\varepsilon$ is a design parameter that helps to balance the trade-off between the accuracy of the approximation and the computational complexity of the abstraction.  
The resulting abstract system is denoted as the transition system $\Sigma_{\mathrm{D}}$ formally defined by,
\begin{align}\label{Sigma_q}
\Sigma_{\mathrm{D}}=\left(\mathcal{S}_{\mathrm{D}}, \mathcal{A}_{\mathrm{D}}, \Delta, \overline{g}, \underline{g} \right),
\end{align}
where,	
\begin{itemize}
%	\item The state space $\mathcal{S}_{\mathrm{D}}$ is constructed by discretizing the continuous state-space $\mathcal{S}$ into $n_{\xi}\geqslant 1$ states using a finite partition\footnote{In order to define a partition, the sets of measure zero where intervals overlap can be ignored for notational convenience.} using $\eta \in \mathbb{R}^+$ as a state-space discretization parameter. Each element $s$ of the partition can be described as an interval $s=[x_1^s,x_2^s]$;
\item The state space $\mathcal{S}_{\mathrm{D}}$ is constructed by discretizing the continuous state space $\mathcal{S}$ into $n_{\xi} \geqslant 1$ states. This discretization is achieved by dividing the state space into intervals or partitions\footnote{In order to define a partition, the sets of measure zero where intervals overlap can be ignored for notational convenience.}, where each element of the partition, denoted by $s$, can be represented as an interval $s = [\xi_1^s, \xi_2^s]$. The parameter $\eta \in \mathbb{R}^+$ is used to control the level of discretization, determining the size of the intervals of the state representation.
\item The input space $\mathcal{A}_{\mathrm{D}}$ is constructed by discretizing the continuous input-space $\mathcal{A}$ into $n_u\geqslant 1$ inputs using a finite partition $\mu \in \mathbb{R}^+$ as a state-space discretization parameter. Each element $a$ of the partition can be described as an interval $a=[v_1^a,v_2^a]$;

% \item The map $\mathsf{A}_{\mathrm{D}}:\mathcal{S}_{\mathrm{D}} \rightarrow \mathcal{A}_{\mathrm{D}}$, such that $a \in \mathcal{A}_{\mathrm{D}}(s)$ if and only if $v \in \mathcal{A}(\xi)$ for all $v \in a$ and all $\xi \in s$. 

\item The transition relation is defined for $s,s^{\prime} \in \mathcal{S}_{\mathrm{D}}$ and $a \in \mathcal{A}_{\mathrm{D}}$ as $s^{\prime}\in \Delta(s,a)$ if and only if,
$$\begin{aligned}
s^{\prime} \cap \left\lbrace f(s_c,a_c) +(L_{f\xi}\eta+L_{fv}\mu )\mathbb{B} \right\rbrace \neq \emptyset,
\end{aligned}$$
where $s_c=\frac{\xi_1^s+\xi_2^s}{2}$ and $a_c=\frac{v_1^a+v_2^a}{2}$;

\item  $\overline{g}(s, a): \; \mathcal{S}_{\mathrm{D}}\times \mathcal{A}_{\mathrm{D}} \rightarrow \mathbb{R} $ is defined by $ \max\limits_{\xi \in s} \max\limits_{v \in a} g(\xi, v) $ and represents the maximal immediate reward achievable when applying action $a$ from the discrete state $s$;

\item $\underline{g}(s, a): \; \mathcal{S}_{\mathrm{D}}\times \mathcal{A}_{\mathrm{D}} \rightarrow \mathbb{R} $ is defined by $ \min\limits_{\xi \in s}  \min\limits_{v \in a}  g(\xi, v) $ and represents the minimal immediate reward achievable when applying action $a$ from the discrete state $s$;
\end{itemize}
For the transition system $\Sigma_{\mathrm{D}}$, the set of enabled inputs for a state $s \in \mathcal{S}_{\mathrm{D}}$ is formally defined by,
\begin{align}\label{key9}
\resizebox{0.85\hsize}{!}{$a \in \mathcal{A}_{\mathrm{D}}(s) \Longleftrightarrow 	s^{\prime} \cap \left\lbrace f(s_c,a_c) +(L_{f\xi}\eta+L_{fv}\mu )\mathbb{B} \right\rbrace \subseteq \mathcal{S}_{\mathrm{D}},$}
\end{align}
where $s_c=\frac{\xi_1^s+\xi_2^s}{2}$ and $a_c=\frac{v_1^a+v_2^a}{2}$, and
\begin{align}\label{key10}
\resizebox{0.85\hsize}{!}{$\forall a \in \mathcal{A}_{\mathrm{D}}(s), \Delta(s,a)=s^{\prime} \cap \left\lbrace f(s_c,a_c) +(L_{f\xi}\eta+L_{fv}\mu )\mathbb{B}\right\rbrace.$}
\end{align}
The following proposition relates formally system $\Sigma_{\mathrm{D}}$ to system $\Sigma$ by an alternating simulation relation.

\begin{prop}\label{Prop1}
Consider the transition systems $\Sigma = (\mathcal{S}, \mathcal{A}, F, g, g)$ and $\Sigma_{\mathrm{D}} = \left(\mathcal{S}_{\mathrm{D}}, \mathcal{A}_{\mathrm{D}}, \Delta, \overline{g}, \underline{g}\right)$ then, the relation,  $$ \mathcal{R} =\left\lbrace (\xi, s) \in \mathcal{S}\times \mathcal{S}_{\mathrm{D}}| \;\; \xi\in s \right\rbrace,$$ is an  alternating simulation relation from $\Sigma_{\mathrm{D}}$ to system $\Sigma$.
\end{prop}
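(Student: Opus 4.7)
The plan is to verify the two clauses of the alternating simulation definition directly on $\mathcal{R}$, interpreting a pair as matching a concrete state $\xi\in\mathcal{S}$ with the abstract cell $s\in\mathcal{S}_{\mathrm{D}}$ that contains it. Clause~(1) is essentially free: the cells partition $\mathcal{S}$, so each abstract state $s=[\xi_1^s,\xi_2^s]$ is a non-empty sub-interval of $\mathcal{S}$; its centre $s_c$ trivially gives a concrete companion with $(s_c,s)\in\mathcal{R}$.

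The substantive step is clause~(2). Fix $(\xi,s)\in\mathcal{R}$ and an enabled abstract input $a\in\mathcal{A}_{\mathrm{D}}(s)$; I would answer it on the concrete side by any $v\in a$ (for definiteness take $v=a_c$). From $\xi\in s$ and $v\in a$ one has $\|\xi-s_c\|\leqslant\eta$ and $\|v-a_c\|\leqslant\mu$, so Assumption~\ref{Assum1} gives
\begin{equation*}
\|f(\xi,v)-f(s_c,a_c)\|\leqslant L_{f\xi}\eta+L_{fv}\mu,
\end{equation*}
placing $f(\xi,v)$ inside the inflated ball $f(s_c,a_c)+(L_{f\xi}\eta+L_{fv}\mu)\mathbb{B}$ used in \eqref{key10}. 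Since $a$ is enabled at $s$, this inflated ball is covered by abstract cells and therefore contained in $\mathcal{S}$, which also yields $v\in\mathcal{A}(\xi)$. As $F$ is deterministic, the unique concrete successor is $\xi'=f(\xi,v)$; let $s'$ be any abstract cell containing $\xi'$. Since $\xi'$ lies in both $s'$ and the inflated ball, $s'\in\Delta(s,a)$ by \eqref{key10}, and $\xi'\in s'$ gives $(\xi',s')\in\mathcal{R}$, closing the condition.

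The only delicate point is the bookkeeping between the quantisation parameters $(\eta,\mu)$ and the cell radii in the infinity norm; one needs the radii to be exactly $\eta$ and $\mu$ so that the Lipschitz estimate above matches the inflation used in the definition of $\Delta$. With that identification in place, the rest of the argument is a direct application of Assumption~\ref{Assum1} together with the construction of $\Sigma_{\mathrm{D}}$; in particular, Lemma~\ref{Lemma1} does not enter the proof of this proposition.
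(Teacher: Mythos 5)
Your proposal is correct and follows essentially the same route as the paper's own proof: answer the enabled abstract input $a$ with a concrete $v\in a$, use Assumption~\ref{Assum1} with $\|\xi-s_c\|\leqslant\eta$, $\|v-a_c\|\leqslant\mu$ to place $f(\xi,v)$ in the inflated ball defining $\Delta(s,a)$, and conclude that the cell containing $\xi'=f(\xi,v)$ is a matching abstract successor. Your added remarks (admissibility of $v$ via containment of the inflated set in $\mathcal{S}$, the radius-versus-spacing bookkeeping, and the non-use of Lemma~\ref{Lemma1}) are consistent with, and slightly more explicit than, the paper's argument.
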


\begin{proof}
The first condition of alternating simulation follow directly from the form of $\mathcal{R}$ and from the fact that $\Sigma_{\mathrm{D}}$ and $\Sigma$ have the same sets $\mathcal{S}= \cup_{\ell=1}^{M} s_{i}, \; s_{i}\in \mathcal{S}_{\mathrm{D}}$.

Since $\mathcal{S} \subseteq \mathcal{S}_{\mathrm{D}}$, it follows from the definition of the transition relation $F$ that,
$$
\forall \xi \in s, v \in  \mathcal{A}(\xi), f(\xi, v) \subseteq \Delta(s,a).
$$
Therefore, from \eqref{key7} and \eqref{key9}, it follows that we have for all $s \in \mathcal{S}_{\mathrm{D}}$, $\mathcal{A}_{\mathrm{D}}(s) \subseteq \mathcal{A}(s) $. Moreover, from \eqref{key8} and \eqref{key10}, for all $a\in \mathcal{A}_{\mathrm{D}}(s)$, it holds $F(s,a)\subseteq\Delta(s,a)$.

Let us now show the second condition of alternating simulation. Consider $\left(\xi, s \right) \in \mathcal{R}$, take any $\xi \in s$. Let $a \in \mathcal{A}_{\mathrm{D}}\left(s\right)$, then there exists $v\in a$, such that $v \in \mathcal{A}\left(\xi\right)$. Note that such $\xi$ and $v$ satisfy, $\|\xi-\xi_{c}\|\leqslant\eta$ and $\|v-v_{c}\|\leqslant\mu$. Hence, because $f\left(\xi, v\right)-f\left(s_{c}, v_{c}\right) \in (L_{f\xi}\eta+L_{fv}\mu )\mathbb{B} $, we have $F\left(\xi, v\right) \subseteq \Delta\left(s,a\right)$.
Therefore, for all $\xi^{\prime} = F\left(\xi,v\right)$, there exists $s^{\prime} \in\Delta\left(s, a\right)$ satisfying $\xi' \in s'$ and hence $(\xi',s')\in \mathcal{R}$.
%	we claim that $ (\xi', s')\in \mathcal{R}$. To determine if $(\xi', s')\in \mathcal{R}$, consider $ \overline{\xi}\in s$ and $\overline{v}\in a $ satisfying $\|\xi-\overline{\xi}\|\leqslant\eta$ and $\|v-\overline{v}\|\leqslant\mu$, respectively and compute
%	\begin{align}
%		\|\xi' - s'\| \leqslant \|f(\xi,v) - f(\overline{\xi},\overline{v}) \| \leqslant L_{f\xi}\eta+L_{f v}\mu \leq \varepsilon,
%	\end{align}
%	hence $\left(\xi^{\prime}, s^{\prime}\right) \in \mathcal{R}$
\end{proof}

%\begin{pf}
%	content
%\end{pf}

It is important to highlight that within the abstracted model $\Sigma_{\mathrm{D}}$, two reward functions are introduced to capture the maximal and minimal reward over a discrete state $s$. The work of \cite{watkins1989learning} has shown that one of the conditions for convergence of Q-learning is a bounded reward function. In this work, for the sake of generality, the following assumption is made,
\begin{assum}\label{Assum2}
For any $\xi, \overline{\xi} \in \mathcal{S}$, $v \in \mathcal{A}(\xi)$, and $\overline{v} \in \mathcal{A}(\overline{\xi})$, the reward function $g$ satisfies:
\begin{align}\label{EqAssum2}
| g(\xi,v) - g(\overline{\xi},\overline{v}) | \leqslant  L_{g \xi}\|\xi-\overline{\xi}\|+L_{g v}\|v-\overline{v}\|,
\end{align}
where $L_{g\xi}$ and $L_{gv}$ are positive constants.
\end{assum}

\section{Q-learning for symbolic model}
Inspired by the construction in \cite{borri2023reinforcement}, this paper proposes to build the Q-learning on the reward structure in \eqref{Sigma_q} and defines the maximal and minimal $return$ by, 
\begin{align*}
&\overline{G}_{k}\!\!=\!\!\overline{g}(\!s_{k+1},a_{k+1})\!\!+\!\!\gamma \overline{g}(s_{k+2},a_{k+2})\!\!+\!\!\gamma^{2} \overline{g}(\!s_{k+3},a_{k+3})\!\!+\!\ldots \\	&\underline{G}_{k}\!\!=\!\!\underline{g}(\!s_{k+1},a_{k+1})\!\!+\!\!\gamma \underline{g}(\!s_{k+2},a_{k+2})\!\!+\!\!\gamma^{2} \underline{g}(\!s_{k+3},a_{k+3})\!\!+\!\ldots
\end{align*}
because $ \overline{G}_{k}=\sum_{i=0}^{\infty}\gamma^{i} \overline{g}(s_{k+i+1},a_{k+i+1}) $ resp. ($ \underline{G}_{k}=\sum_{i=0}^{\infty}\gamma^{i} \underline{g}(s_{k+i+1},a_{k+i+1}) $) and $\overline{G}_{k+1}=\sum_{i=0}^{\infty} \gamma^{i} \overline{G}_{k+i+2}$ resp. ($\underline{G}_{k+1}=\sum_{i=0}^{\infty} \gamma^{i} \overline{G}_{k+i+2}$), both satisfy the \textit{consistency} relation,
\begin{equation}\label{Eq3}
\begin{array}{c}
\overline{G}_{k}=\overline{g}(s_{k+1},a_{k+1})+\gamma \overline{G}_{k+1},\\
\underline{G}_{k}=\underline{g}(s_{k+1},a_{k+1})+\gamma \underline{G}_{k+1}.
\end{array}
\end{equation}
where $\gamma \in(0,1]$ is the discount rate  and the tail of a state run starting at time $k \geq 0$, together with the corresponding sequence of rewards is given by, 
$$
s_{k} \underset{g(s_{k+1},a_{k+1})}{\stackrel{\mathcal{A}(s_{k})}{\longrightarrow}} s_{k+1} \underset{g(s_{k+2},a_{k+2})}{\stackrel{\mathcal{A}(s_{k+2})}{\longrightarrow}} s_{k+2} \underset{g(s_{k+3},a_{k+3})}{\stackrel{\mathcal{A}(s_{k+3})}{\longrightarrow}} s_{k+3} \cdots
$$
While there may be some similarities between certain aspects of this section and the work by \cite{borri2023reinforcement}, it is important to highlight that the approach presented here is entirely novel. This paper introduces a distinct construction that leverages the concept of maximally and minimally return, which generalizes the one in \cite{borri2023reinforcement}.

\subsection{Maximal and Minimal State Value Functions under a Policy $\pi$}
The maximal and minimal state value functions of a state $s \in S_\mathrm{D}$ under a policy $\pi$, denoted $\overline{v}_{\pi}(s)$ and $\underline{v}_{\pi}(s)$, respectively, are defined as the maximal and minimal return that can be obtained by starting in state $s$ and following policy $\pi$.
$$
\begin{aligned}
	& \bar{v}_{\pi}(s)=\max \left\{\overline{G}_{k} \mid s=s_{k}, \pi(s_{\tau})\in \mathcal{A}_{\mathrm{D}}, \tau \geqslant k\right\}, \\
	& \underline{v}_{\pi}(s)=\min \left\{\underline{G}_{k} \mid s=s_{k}, \pi(s_{\tau})\in \mathcal{A}_{\mathrm{D}}, \tau \geqslant k\right\} .
\end{aligned}
$$
The Bellman equations for the maximal and minimal state value functions can be derived directly from equation \eqref{Eq3} and the definitions of $\overline{v}_{\pi}$ and $\underline{v}_{\pi}$, as shown in the following proposition.
\begin{prop}
For each $(s, a) \in S_{\mathrm{D}} \times \mathcal{A}_{\mathrm{D}}$, define $\Delta_{\pi}(s)=\Delta(s, \pi(s))$, for all $s \in S_{\mathrm{D}}$. If the maximal and minimal value functions under policy $\pi$ exist, then they satisfy the following equations,
\begin{subequations}\label{Eq4}
\begin{gather}
\overline{v}_{\pi}(s)=\overline{g}(s, \pi(s))+\gamma \max _{s^{\prime} \in \Delta_{\pi}(s)} \overline{v}_{\pi}\left(s^{\prime}\right), \\
\underline{v}_{\pi}(s)=\underline{g}(s, \pi(s))+\gamma \min _{s^{\prime} \in \Delta_{\pi}(s)} \underline{v}_{\pi}\left(s^{\prime}\right).
\end{gather}
\end{subequations}
\end{prop}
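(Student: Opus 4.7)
The plan is to derive each Bellman-type equation by combining the consistency relation \eqref{Eq3} with a decomposition of the maximum (respectively minimum) over all admissible future trajectories into a maximum (minimum) over the one-step successor followed by the maximum (minimum) over the remaining tail.

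For the maximal case, I would start from an arbitrary trajectory beginning at $s_k = s$ with inputs chosen by $\pi$. The consistency relation gives $\overline{G}_k = \overline{g}(s, \pi(s)) + \gamma\,\overline{G}_{k+1}$, and the first term depends only on $s$ and $\pi(s)$, so it can be factored out of the maximum in the definition of $\overline{v}_\pi(s)$. What remains is to maximise $\gamma\,\overline{G}_{k+1}$ over the trajectories $s_{k+1} s_{k+2} \cdots$ satisfying $s_{k+1} \in \Delta_\pi(s)$ and $\pi(s_\tau) \in \mathcal{A}_{\mathrm{D}}$ for $\tau \geq k+1$. I would then partition this set of admissible tails according to the choice of the first successor $s' \in \Delta_\pi(s)$ and rewrite the outer maximum as a double maximum: first over $s' \in \Delta_\pi(s)$, then over all tails with $s_{k+1} = s'$. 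By the definition of $\overline{v}_\pi$ applied at $s'$, the inner maximum is exactly $\overline{v}_\pi(s')$, which yields the first Bellman equation.

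The second equation follows by a verbatim repetition of the argument with $\max$ replaced by $\min$ and $\overline{g},\overline{v}_\pi$ replaced by $\underline{g},\underline{v}_\pi$, since the consistency relation for $\underline{G}_k$ in \eqref{Eq3} has an identical structure.

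The only delicate step is the interchange of the two maxima (respectively minima) in the middle of the argument. This interchange is legitimate because, for a fixed policy $\pi$, the set of admissible tails beyond $s_{k+1}$ depends only on $s_{k+1}$ and $\pi$ and not on how $s_{k+1}$ was reached, so the collection of full trajectories decomposes as a disjoint union indexed by $s' \in \Delta_\pi(s)$. I would explicitly invoke the hypothesis that $\overline{v}_\pi$ and $\underline{v}_\pi$ exist so that the inner extrema are attained by actual trajectories, which is what allows lifting an optimal tail from $s'$ back to a full trajectory from $s$ that realises the claimed value; without this, one would only obtain an inequality between the two sides rather than the equality stated in \eqref{Eq4}.
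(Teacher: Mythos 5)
Your proposal is correct and follows exactly the route the paper intends: the paper gives no written proof of this proposition, stating only that the equations "can be derived directly from equation \eqref{Eq3} and the definitions of $\overline{v}_{\pi}$ and $\underline{v}_{\pi}$," and your argument (factor the immediate reward out of the extremum via the consistency relation, split the extremum over trajectories into an extremum over the first successor $s'\in\Delta_\pi(s)$ followed by one over the tails from $s'$, and identify the inner extremum with $\overline{v}_\pi(s')$ or $\underline{v}_\pi(s')$) is precisely that derivation, with the attainment and decomposition issues handled correctly.
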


\begin{cor}
If $\gamma \in(0,1)$, then the state value functions $\overline{v}_{\pi}$ and $\underline{v}_{\pi}$ exist, are unique, and satisfy the Bellman equations \eqref{Eq3}.
\end{cor}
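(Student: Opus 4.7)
The plan is to apply the Banach fixed point theorem to the Bellman operators induced by \eqref{Eq4}. Define $\overline{T}_{\pi}, \underline{T}_{\pi}: \mathbb{R}^{\mathcal{S}_{\mathrm{D}}} \to \mathbb{R}^{\mathcal{S}_{\mathrm{D}}}$ by
\begin{align*}
(\overline{T}_{\pi} v)(s) &= \overline{g}(s,\pi(s)) + \gamma \max_{s' \in \Delta_{\pi}(s)} v(s'), \\
(\underline{T}_{\pi} v)(s) &= \underline{g}(s,\pi(s)) + \gamma \min_{s' \in \Delta_{\pi}(s)} v(s').
\end{align*}
Since $\mathcal{S}_{\mathrm{D}}$ is a finite partition of the compact set $\mathcal{S}$ (Assumption~\ref{Compact}), $\mathbb{R}^{\mathcal{S}_{\mathrm{D}}}$ equipped with the supremum norm is a complete metric space, and Assumption~\ref{Assum2} combined with compactness of $\mathcal{S}\times\mathcal{A}$ ensures that $\overline{g}$ and $\underline{g}$ are bounded, so the two operators are well-defined self-maps.

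Next I would establish strict $\gamma$-contractivity. Using the elementary inequality $|\max_{x}\phi(x)-\max_{x}\psi(x)|\leqslant\max_{x}|\phi(x)-\psi(x)|$ (and its $\min$ analogue), for any $v_{1},v_{2}\in\mathbb{R}^{\mathcal{S}_{\mathrm{D}}}$ and $s\in\mathcal{S}_{\mathrm{D}}$,
\begin{align*}
|(\overline{T}_{\pi} v_{1})(s)-(\overline{T}_{\pi} v_{2})(s)|\leqslant \gamma\max_{s'\in\Delta_{\pi}(s)}|v_{1}(s')-v_{2}(s')|\leqslant \gamma\|v_{1}-v_{2}\|_{\infty}.
\end{align*}
Taking the supremum over $s$ yields $\|\overline{T}_{\pi}v_{1}-\overline{T}_{\pi}v_{2}\|_{\infty}\leqslant\gamma\|v_{1}-v_{2}\|_{\infty}$, and the same bound holds for $\underline{T}_{\pi}$. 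Because $\gamma\in(0,1)$ both operators are strict contractions, and Banach's fixed point theorem delivers unique fixed points $\overline{v}^{*},\underline{v}^{*}\in\mathbb{R}^{\mathcal{S}_{\mathrm{D}}}$ satisfying \eqref{Eq4}.

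The remaining step is to identify these fixed points with $\overline{v}_{\pi}$ and $\underline{v}_{\pi}$ as defined before the proposition. I would iterate $\overline{T}_{\pi}$ on the zero function: a short induction shows that $(\overline{T}_{\pi}^{k}0)(s)$ equals the maximum over admissible $k$-step runs of the truncated return $\sum_{i=0}^{k-1}\gamma^{i}\overline{g}(s_{k+i+1},a_{k+i+1})$ from $s$. Since the tail is bounded by $\gamma^{k}\|\overline{g}\|_{\infty}/(1-\gamma)\to 0$, the limit of the iterates coincides both with $\overline{v}^{*}$ (by contraction) and with the supremum of $\overline{G}_{k}$ over admissible runs starting at $s$; the argument for $\underline{v}_{\pi}$ is symmetric.

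The hard part will be this last identification. Because $\Delta_{\pi}(s)$ is set-valued, one must check that the supremum and infimum over infinite admissible runs are attained, so that the $\max$ and $\min$ in the definitions of $\overline{v}_{\pi}$ and $\underline{v}_{\pi}$ are meaningful rather than mere suprema. Finiteness of $\Delta_{\pi}(s)$ at each state allows one to select, at every depth, a successor realizing the one-step Bellman maximizer; concatenating these selections through a greedy/diagonal construction produces an admissible infinite run whose return equals $\overline{v}^{*}(s)$, and symmetrically for $\underline{v}^{*}(s)$.
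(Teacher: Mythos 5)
Your proof is correct and follows the same route the paper intends: the paper omits this proof, deferring to \cite{borri2023reinforcement}, but its own proof of Theorem~\ref{THM2} uses exactly your Banach fixed-point argument on the $\gamma$-contractive Bellman operators, here specialized to the policy-evaluation operators $\overline{T}_{\pi}$ and $\underline{T}_{\pi}$ with the rewards $\overline{g}$ and $\underline{g}$. Your additional care in identifying the fixed points with the defined $\overline{v}_{\pi}$, $\underline{v}_{\pi}$ (truncated-return induction plus greedy selection over the finite branching of $\Delta_{\pi}(s)$ to show the max/min are attained) goes beyond what the paper records and is sound; note only that the corollary's reference to \eqref{Eq3} appears to be a typo for \eqref{Eq4}, which is what you (correctly) prove.
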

\begin{pf}
This proof has been omitted because it follows similar steps used to prove \cite[Corollary 1]{borri2023reinforcement} with the difference that each state value function is defined with its corresponding discrete reward function $\overline{g}$ or $\underline{g}$. \hfill $\square$
%		The set of equations \eqref{Eq4} can be seen as the Bellman optimality equation for a decision process that is deterministic. In this decision process, $S_{q}$ represents the set of states, and the action set of a state is used to model the next state that should be selected within $p(s,\pi(s))$. This interpretation is also supported by \cite{borri2023reinforcement,possieri2022reachable} . As a result, according to \cite[Section 6.2]{puterman2014markov}, if $\gamma \in(0,1)$, there is a unique solution to equation \eqref{Eq4}.
\end{pf}

It is worth noting that in the context of non-deterministic transition system, two state value functions are need to be defined instead of a single one for two reasons. Firstly, for each state, there are two reward functions, $\overline{g}$ and $\underline{g}$, to capture all possible rewards when transiting to the successor state $ s' \in  \Delta_{\pi}(s) \subset S_{\mathrm{D}}$. Secondly, because the successor of a discrete state is not necessarily unique, these reward functions permits to identify the range of rewards of the successors.

\begin{rem}
The relation between the Bellman equations \eqref{Eq4} and the classical optimality Bellman equation for Markov Decision Processes (MDPs), indicates that dynamic programming can be employed to determine the maximal and minimal state value functions under a policy $\pi$. It is important to highlight that the iterative application of the Bellman equations \eqref{Eq4} exhibits $\gamma$-contraction, leading to linear convergence of these algorithms at a rate of $\gamma$. \hfill $\diamond$
\end{rem}
%%%%%%%%%%%%
\subsection{Maximal and Minimal $q$-Functions under a Policy $\! \pi$}

Now, define the maximal and  minimal state-action value function for state $s \in S_\mathrm{D}$ and action $a \in \mathcal{A}_{\mathrm{D}} $ under policy $\pi$ by,
\begin{subequations}
\begin{gather}
\overline{q}_{\pi}(s, a)=\max \left\{\overline{G}_{k} \mid s_{k}=s, a_{k}=a,  \pi(s^{\prime})\right\}, \\
\underline{q}_{\pi}(s, a)=\min \left\{\underline{G}_{k} \mid s_{k}=s, a_{k}=a,  \pi(s^{\prime})\right\}.
\end{gather}
\end{subequations}
where $\overline{q}_{\pi}(s, a)$ and $\underline{q}_{\pi}(s, a)$, denote the maximal and minimal return that can be obtained starting in state $s$, taking action $a$, and following $\pi$ thereafter.

The following proposition is straightforward from the consistency relation \eqref{Eq3} and the definitions of $\overline{v}_{\pi}, \overline{q}_{\pi} \cdot \underline{v}_{\pi}$, and $\underline{q}_{\pi}$.

\begin{prop}
If $\gamma \in(0,1)$, then $\overline{q}_{\pi}$ and $\underline{q}_{\pi}$ exist, are unique, and satisfy the following consistency relations
\begin{subequations}
\begin{gather}
\overline{q}_{\pi}(s, a)=\overline{g}(s, \pi(s))+\gamma \max _{s^{\prime} \in \Delta(s, a)} \overline{v}_{\pi}\left(s^{\prime}\right) ,\\
\underline{q}_{\pi}(s, a)=\underline{g}(s, \pi(s))+\gamma \min _{s^{\prime} \in \Delta(s, a)} \underline{v}_{\pi}\left(s^{\prime}\right).
\end{gather}
\end{subequations}
\end{prop}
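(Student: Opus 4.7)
The plan is to mirror the structure of the preceding corollary on $\overline{v}_{\pi}$ and $\underline{v}_{\pi}$: establish existence and uniqueness via a Banach fixed point / contraction argument on an appropriately defined Bellman operator, then derive the consistency relations directly from the definitions of $\overline{G}_k$, $\underline{G}_k$, $\overline{v}_\pi$, and $\underline{v}_\pi$.

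First, for existence and uniqueness, I would introduce on the space of bounded functions $q : \mathcal{S}_{\mathrm{D}} \times \mathcal{A}_{\mathrm{D}} \to \mathbb{R}$, equipped with the sup norm (well-defined because $\overline{g}$ and $\underline{g}$ are bounded by Assumptions \ref{Compact} and \ref{Assum2} together with the compactness of $\mathcal{S}$, $\mathcal{A}$), the operators
\begin{align*}
(\overline{T}_{\pi} q)(s,a) &= \overline{g}(s,a) + \gamma \max_{s' \in \Delta(s,a)} \max_{a' \in \mathcal{A}_{\mathrm{D}}(s')}[q(s',a') \mathbf{1}_{a'=\pi(s')}], \\
(\underline{T}_{\pi} q)(s,a) &= \underline{g}(s,a) + \gamma \min_{s' \in \Delta(s,a)} q(s',\pi(s')).
\end{align*}
A routine computation shows $\|\overline{T}_\pi q_1 - \overline{T}_\pi q_2\|_\infty \leqslant \gamma \|q_1 - q_2\|_\infty$ and similarly for $\underline{T}_\pi$, since the $\max$ and $\min$ operations are non-expansive. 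Because $\gamma \in (0,1)$, the Banach fixed point theorem yields a unique fixed point of each operator; these fixed points are $\overline{q}_\pi$ and $\underline{q}_\pi$. This argument is the exact analogue of the proof of \cite[Corollary 1]{borri2023reinforcement} invoked for the preceding corollary, so I would simply refer to it.

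For the consistency relations, I would argue as follows. Fix $(s,a) \in \mathcal{S}_{\mathrm{D}} \times \mathcal{A}_{\mathrm{D}}$ and a run starting at $s_k = s$ with $a_k = a$. By the consistency relation \eqref{Eq3}, $\overline{G}_k = \overline{g}(s_{k+1},a_{k+1}) + \gamma \overline{G}_{k+1}$. Since after time $k$ the actions are dictated by $\pi$, i.e.\ $a_\tau = \pi(s_\tau)$ for $\tau \geqslant k+1$, taking the maximum over admissible successor runs decouples as
\[
\overline{q}_\pi(s,a) = \overline{g}(s,a) + \gamma \max_{s' \in \Delta(s,a)} \max\{\overline{G}_{k+1} \mid s_{k+1}=s',\, \pi(s_\tau) \in \mathcal{A}_{\mathrm{D}}, \tau \geqslant k+1\},
\]
where the outer max ranges over all states $s'$ reachable via the non-deterministic transition relation $\Delta(s,a)$, and the inner max is exactly $\overline{v}_\pi(s')$ by definition. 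The same argument with $\min$ in place of $\max$ and $\underline{g}$, $\underline{v}_\pi$ in place of $\overline{g}$, $\overline{v}_\pi$ yields the lower consistency relation.

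The main subtlety, and the only part I would double-check, is handling the non-determinism of $\Delta(s,a)$: the maximum of a sum of terms depending on future states must be written as a maximum over the immediate successor first, then recursively as the maximum along the continuation under $\pi$. Because the choice of $s_{k+1} \in \Delta(s,a)$ is independent of the choices made for $\tau \geqslant k+1$ (the latter being fixed by the policy once $s_{k+1}$ is chosen), this decoupling is legitimate and the supremum interchange is valid. Once this is justified, the consistency relations follow immediately from the corresponding definitions of $\overline{v}_\pi$ and $\underline{v}_\pi$.
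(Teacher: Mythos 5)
Your proof is correct and takes essentially the approach the paper intends: the paper omits the proof entirely, declaring the proposition ``straightforward from the consistency relation \eqref{Eq3} and the definitions,'' which is precisely your decoupling of the maximum over runs into a maximum over the immediate nondeterministic successor followed by the maximum over the continuation under $\pi$, supplemented by the same $\gamma$-contraction/fixed-point reasoning invoked for the preceding corollary. One small slip worth fixing: in your definition of $\overline{T}_{\pi}$ the term $\max_{a' \in \mathcal{A}_{\mathrm{D}}(s')}[q(s',a')\mathbf{1}_{a'=\pi(s')}]$ evaluates to $\max\{q(s',\pi(s')),\,0\}$ rather than $q(s',\pi(s'))$ (problematic since the relevant $q$-values are typically negative), so it should simply read $q(s',\pi(s'))$ as in your $\underline{T}_{\pi}$.
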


In view of (3) and (4), one has that $\overline{v}_{\pi}(s)=\overline{q}_{\pi}(s, \pi(s))$ and $\underline{v}_{\pi}(s)=\underline{q}_{\pi}(s, \pi(s))$. Therefore, combining these equations, one can obtain Bellman equations for $\overline{q}_{\pi}$ and $\underline{q}_{\pi}$,
\begin{subequations}\label{Eq6}
\begin{gather}
\resizebox{0.87\hsize}{!}{$\overline{q}_{\pi}(s, a)=\overline{g}(s, \pi(s))+\gamma \max\limits_{s^{\prime} \in \Delta(s, a)} \max\limits_{a^{\prime} \in \mathcal{A}(s^{\prime})} \overline{q}_{\pi}\left(s^{\prime}, a^{\prime}\right),$} \\
\resizebox{0.87\hsize}{!}{$\underline{q}_{\pi}(s, a)=\underline{g}(s, \pi(s))+\gamma \!\!\min\limits_{s^{\prime} \in \Delta(s, a)} \max\limits_{a^{\prime} \in \mathcal{A}(s^{\prime})} \underline{q}_{\pi}\left(s^{\prime}, a^{\prime}\right)$} .
\end{gather}
\end{subequations}

% \begin{rem}
% Note that the results established in \cite{borri2023reinforcement} addresses the learning for finite transition systems. In our case the starting system is characterised with a countinuous state space and that the maximal and minimal state value functions and state action value functions are maximal respectively maximal with respect to constant reward function \hfill $\diamond$
% \end{rem}

%%%%%%%%%%%%%%%%%%
\subsection{The optimal state action value function}
%%%%%%%%%%%%%%%%%%
The maximally and minimally optimal value functions are defined as follows:
\begin{eqnarray}
\forall s \in S_{\mathrm{D}}, \: \overline{v}^{\star}(s) = \max\limits_{\pi} \overline{v}_{\pi}(s), \: \underline{v}^{\star}(s) = \max\limits_{\pi} \underline{v}_{\pi}(s) 
\end{eqnarray}
where all maximally and minimally optimal policies $\overline{\pi}^{\star}\left(\underline{\pi}^{\star}\right)$ have their own maximal optimal state-value function $\overline{v}^{\star}\left(\underline{v}^{\star}\right)$. Using these functions, the maximally and minimally optimal state-action value functions are defined by,
\begin{subequations}
\begin{gather}
\overline{q}^{\star}(s, a)=\max\limits_{\pi} \overline{q}_{\pi}(s, a),\\
\underline{q}^{\star}(s, a)=\max\limits_{\pi} \underline{q}_{\pi}(s, a),
\end{gather}
\end{subequations}
where $\overline{q}^{\star}(s, a)$ is the maximal and minimal return that can be obtained starting in state $s$, taking action $a$, and acting optimally thereafter.
These functions satisfy the following equations:
\begin{subequations}\label{Eq18ab}
\begin{gather}
\overline{q}^{\star}(s, a)=\overline{g}(s, a)+\gamma \max _{s^{\prime} \in \Delta(s, a)}\max _{a^{\prime} \in \mathcal{A}(s^{\prime})} \overline{q}^{\star}\left(s^{\prime}\right), \\
\underline{q}^{\star}(s, a)=\underline{g}(s, a)+\gamma \min _{s^{\prime} \in \Delta(s, a)}\max _{a^{\prime} \in \mathcal{A}(s^{\prime})} \underline{q}^{\star}\left(s^{\prime}\right) .
\end{gather}
\end{subequations}
The following theorem characterizes the existence and uniqueness of the maximally and minimally optimal state value functions of a symbolic model $\Sigma_{\mathrm{D}}$.
\begin{thm}\label{THM2}
If $\gamma \in(0,1)$, then there exists a unique maximally and minimally optimal state value function $\overline{v}^{\star}$ and a unique minimally optimal state value function $\underline{v}^{\star}$ that satisfy,
\begin{subequations}\label{Eq19ab}
\begin{gather}
\overline{v}^{\star}(s)=\max _{a \in \mathcal{A}_{\mathrm{D}}}\left\{\overline{g}(s, a)+\gamma \max _{s^{\prime} \in \Delta(s, a)} \overline{v}^{\star}\left(s^{\prime}\right)\right\}, \\
\underline{v}^{\star}(s)=\max _{a \in \mathcal{A}_{\mathrm{D}}}\left\{\underline{g}(s, a)+\gamma \min _{s^{\prime} \in \Delta(s, a)} \underline{v}^{\star}\left(s^{\prime}\right)\right\} .
\end{gather}
\end{subequations}
\end{thm}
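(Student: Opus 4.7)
The plan is to carry out the standard Bellman-operator / Banach fixed-point argument, adapted to the pair of operators that arise from the non-deterministic abstraction $\Sigma_{\mathrm{D}}$. Because $\mathcal{S}_{\mathrm{D}}$ and $\mathcal{A}_{\mathrm{D}}$ are finite (they come from a quantization of the compact sets $\mathcal{S}$, $\mathcal{A}$) and $\overline{g},\underline{g}$ are bounded (this follows from Assumption~\ref{Compact} together with the Lipschitz bound in Assumption~\ref{Assum2}), the natural ambient space is $B(\mathcal{S}_{\mathrm{D}})$, the bounded real-valued functions on $\mathcal{S}_{\mathrm{D}}$, equipped with the sup-norm $\|v\|_\infty=\sup_{s}|v(s)|$. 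This is a complete metric space, so fixed points can be produced via contraction.

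First, I would introduce two Bellman operators $\overline{T},\underline{T}:B(\mathcal{S}_{\mathrm{D}})\to B(\mathcal{S}_{\mathrm{D}})$ by
\begin{equation*}
(\overline{T}v)(s)=\max_{a\in\mathcal{A}_{\mathrm{D}}}\Bigl\{\overline{g}(s,a)+\gamma\max_{s'\in\Delta(s,a)}v(s')\Bigr\},
\end{equation*}
\begin{equation*}
(\underline{T}v)(s)=\max_{a\in\mathcal{A}_{\mathrm{D}}}\Bigl\{\underline{g}(s,a)+\gamma\min_{s'\in\Delta(s,a)}v(s')\Bigr\},
\end{equation*}
so that the two equations in \eqref{Eq19ab} become $\overline{v}^{\star}=\overline{T}\overline{v}^{\star}$ and $\underline{v}^{\star}=\underline{T}\underline{v}^{\star}$. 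The first step is then to show that both operators are $\gamma$-contractions, i.e.\ $\|\overline{T}v-\overline{T}w\|_\infty\le\gamma\|v-w\|_\infty$ and similarly for $\underline{T}$. This uses only the elementary facts $|\max_x f(x)-\max_x h(x)|\le\max_x|f(x)-h(x)|$ and the analogous inequality for $\min$, applied twice (once over $s'\in\Delta(s,a)$, once over $a\in\mathcal{A}_{\mathrm{D}}$); the $\gamma$ then comes out as the contraction constant because the reward terms cancel.

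Once contraction is established, the Banach fixed-point theorem (applicable since $B(\mathcal{S}_{\mathrm{D}})$ is complete under $\|\cdot\|_\infty$) immediately yields the existence and uniqueness of fixed points $\overline{v}_\infty,\underline{v}_\infty$ of $\overline{T}$ and $\underline{T}$. The remaining and most delicate step is to identify these fixed points with the optimal value functions $\overline{v}^{\star}$ and $\underline{v}^{\star}$ as defined by the maximization over policies. For the upper case I would argue in two directions: $(i)$ for any policy $\pi$, the previously derived Bellman equation \eqref{Eq4}(a) gives $\overline{v}_\pi\le\overline{T}\overline{v}_\pi$ pointwise, whence by monotonicity of $\overline{T}$ and iteration we obtain $\overline{v}_\pi\le\overline{v}_\infty$, so $\overline{v}^{\star}\le\overline{v}_\infty$; $(ii)$ conversely, picking at every state a maximizer $\pi_\star(s)\in\arg\max_a\{\overline{g}(s,a)+\gamma\max_{s'\in\Delta(s,a)}\overline{v}_\infty(s')\}$ (well-defined because $\mathcal{A}_{\mathrm{D}}$ is finite) makes $\overline{v}_\infty$ coincide with $\overline{v}_{\pi_\star}$, giving $\overline{v}_\infty\le\overline{v}^{\star}$. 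The same sandwich argument, using \eqref{Eq4}(b), handles $\underline{v}^{\star}$; the only subtlety here is that the mixed $\max_a\min_{s'}$ structure still preserves monotonicity in the candidate value function, which is what makes iteration of $\underline{T}$ work.

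The main obstacle I expect is the last step for $\underline{v}^{\star}$: because $\underline{T}$ mixes $\max$ over actions with $\min$ over adversarial successors, one has to be careful that the greedy policy extracted from a fixed point $\underline{v}_\infty$ really reproduces the same value rather than a strictly smaller one. I would handle this by noting that once $\pi$ is fixed, $\Delta(\cdot,\pi(\cdot))$ is a well-defined set-valued map and the $\min_{s'\in\Delta(s,\pi(s))}$ is intrinsic to $\pi$, so the greedy policy $\underline{\pi}_\star(s)\in\arg\max_a\{\underline{g}(s,a)+\gamma\min_{s'\in\Delta(s,a)}\underline{v}_\infty(s')\}$ yields exactly $\underline{v}_{\underline{\pi}_\star}=\underline{v}_\infty$ after reinvoking the Bellman equation \eqref{Eq4}(b). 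Everything else (boundedness of $\overline{g},\underline{g}$ via Assumptions \ref{Assum1}--\ref{Assum2} and Lemma~\ref{Lemma1}, finiteness of arg-max sets) is routine.
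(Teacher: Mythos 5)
Your proposal is correct and follows essentially the same route as the paper: define the two Bellman operators $\overline{T}$ and $\underline{T}$, establish that each is a $\gamma$-contraction in the sup-norm, and invoke the Banach fixed-point theorem on the complete space of bounded functions over $\mathcal{S}_{\mathrm{D}}$. The only difference is that you additionally spell out the identification of the fixed points with the policy-optimized value functions via the monotonicity/greedy-policy sandwich, a step the paper subsumes under an appeal to the Bellman optimality principle; your version is more complete but not a different argument.
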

\begin{pf}
\textcolor{black}{
By the Bellman optimality principle, optimal value functions involve taking the optimal action at the first step and acting optimally thereafter. This leads to consistency relations:
\begin{align*}
&\bar{v}^{\star}(s) = \max_{a \in \mathcal{A}}\left\{\overline{g}(s, a) + \gamma \max_{s' \in \Delta(s, a)} \bar{v}^{\star}\left(s'\right)\right\} ,\\
&\underline{v}^{\star}(s) = \max_{a \in \mathcal{A}}\left\{\underline{g}(s, a) + \gamma \min_{s' \in \Delta(s, a)} \underline{v}^{\star}\left(s'\right)\right\}.
\end{align*}
Define $\overline{T}^{\star} V(s) = \max_{a \in \mathcal{A}}\left\{r(s, a) + \gamma \max_{s' \in \Delta(s, a)} V\left(s'\right)\right\}$ and $\underline{T}^{\star} V(s) = \max_{a \in \mathcal{A}}\left\{r(s, a) + \gamma \min_{s' \in \Delta(s, a)} V\left(s'\right)\right\}$, as operators. Then, for value functions \(V\) and \(U\):
\begin{align*}
&\left\|\overline{T}^{\star} V - \overline{T}^{\star} U\right\|_{\infty} \leq \gamma \|V - U\|_{\infty}, \\
&\left\|\underline{T}^{\star} V - \underline{T}^{\star} U\right\|_{\infty} \leq \gamma \|V - U\|_{\infty}.
\end{align*}		
Thus, both operators \(\overline{T}^{\star}\) and \(\underline{T}^{\star}\) are \(\gamma\)-contractions. Therefore, by the Banach fixed-point theorem, if \(\gamma \in (0,1)\), there exists a unique maximally (minimally) optimal state value function \(\bar{v}^{\star}(\underline{v}^{\star})\) satisfying \eqref{Eq19ab}.}
\hfill $\square$
\end{pf}
\begin{cor}\label{Cor}
If $\gamma \in(0,1)$, then there exist deterministic maximally and minimally optimal policies for the symbolic model $\Sigma_{\mathrm{D}}$. In particular, these policies are given by,
\begin{eqnarray}\label{Eq20ab}
\overline{\pi}^{\star}(s)=\underset{a \in \mathcal{A}}{\arg \max } \; \overline{q}^{\star}(s, a), \:  \underline{\pi}^{\star}(s)=\underset{a \in \mathcal{A}}{\arg \max } \; \underline{q}^{\star}(s, a) .
\end{eqnarray}
\end{cor}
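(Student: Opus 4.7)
The plan is to deduce the corollary directly from Theorem \ref{THM2} by identifying the Bellman optimality equations \eqref{Eq19ab} for the value functions with the q-function equations \eqref{Eq18ab} and then reading off the policy as the argmax of the corresponding q-function. Throughout, the finiteness of $\mathcal{A}_{\mathrm{D}}$ (guaranteed by Assumption \ref{Compact} together with the quantization $\mathrm{q}=(\eta,\mu)$) will ensure that the $\arg\max$ in \eqref{Eq20ab} is attained, so the induced policies are well-defined and deterministic.

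First I would observe that, by \eqref{Eq18ab}, one has $\overline{v}^{\star}(s)=\max_{a\in\mathcal{A}_{\mathrm{D}}}\overline{q}^{\star}(s,a)$ and $\underline{v}^{\star}(s)=\max_{a\in\mathcal{A}_{\mathrm{D}}}\underline{q}^{\star}(s,a)$, since the right-hand sides of \eqref{Eq19ab} coincide with the right-hand sides of \eqref{Eq18ab} after maximizing over $a$. Because $\mathcal{A}_{\mathrm{D}}$ is a finite set, the maximizers exist, and I define $\overline{\pi}^{\star}(s)$ and $\underline{\pi}^{\star}(s)$ as in \eqref{Eq20ab} by breaking ties arbitrarily. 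These are deterministic stationary policies by construction.

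The next step is to verify that these policies actually attain the optimal value. For the maximizing case I would substitute $a=\overline{\pi}^{\star}(s)$ into the Bellman equation \eqref{Eq19ab}a to obtain
\begin{equation*}
\overline{v}^{\star}(s)=\overline{g}(s,\overline{\pi}^{\star}(s))+\gamma \max_{s'\in\Delta(s,\overline{\pi}^{\star}(s))}\overline{v}^{\star}(s'),
\end{equation*}
which is exactly the fixed-point equation \eqref{Eq4}a satisfied by $\overline{v}_{\overline{\pi}^{\star}}$. By the uniqueness of the fixed point of the $\gamma$-contraction operator (which is the argument used in the proof of Theorem \ref{THM2}), it follows that $\overline{v}_{\overline{\pi}^{\star}}=\overline{v}^{\star}$, so $\overline{\pi}^{\star}$ is maximally optimal. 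The identical argument, replacing $\overline{g}$ by $\underline{g}$ and $\max_{s'}$ by $\min_{s'}$, shows that $\underline{v}_{\underline{\pi}^{\star}}=\underline{v}^{\star}$.

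The only real subtlety, and the part I would be most careful about, is justifying the uniqueness appeal for the policy-specific Bellman operators $T_{\overline{\pi}^{\star}}$ and $T_{\underline{\pi}^{\star}}$: these are again $\gamma$-contractions on $(\mathbb{R}^{|\mathcal{S}_{\mathrm{D}}|},\|\cdot\|_\infty)$ by exactly the same nonexpansiveness of $\max$ and $\min$ used in Theorem \ref{THM2}, so Banach's fixed-point theorem applies and gives the equality $\overline{v}_{\overline{\pi}^{\star}}=\overline{v}^{\star}$ and $\underline{v}_{\underline{\pi}^{\star}}=\underline{v}^{\star}$. Aside from this, every step is a direct consequence of definitions already established, so the proof reduces to a short verification rather than any new technical argument.
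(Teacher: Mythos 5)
Your proposal is correct and follows essentially the same route as the paper's own proof: the paper invokes Theorem 2 for existence/uniqueness of $\overline{q}^{\star}$ and $\underline{q}^{\star}$ and then appeals to the Bellman optimality principle, which is exactly the step you spell out in detail (attainment of the argmax over the finite set $\mathcal{A}_{\mathrm{D}}$, substitution of the greedy action into the Bellman equation, and uniqueness of the fixed point of the policy-specific $\gamma$-contraction). Your version is simply a more explicit rendering of the same argument.
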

\begin{pf}\textcolor{black}{
By Theorem \ref{THM2}, if $\gamma \in(0,1)$, then the state-action value function $\bar{q}^{\star}\left(\underline{q}^{\star}\right)$ exists, is unique, and can be obtained by $\overline{v}^{\star}$ $\left(\underline{v}^{\star}\right)$ using \eqref{Eq18ab}. Hence, by the Bellman optimality principle, the deterministic policy $\bar{\pi}^{\star}\left(\pi^{\star}\right)$ given in \eqref{Eq20ab} is maximally (resp. minimally) optimal.}
\end{pf}

\begin{rem}
The work presented in \cite{borri2023reinforcement} was motivated by the question of applying Q-learning to nondeterministic transition systems that lack transition probabilities. However, it is important to note that their focus was not specifically on addressing the Q-learning problem for continuous dynamical systems. They primarily revolves around the convergence of the maximally and minimally Q-values for non-deterministic transition systems. However, their work does not provide any guarantees for applying abstraction-based controller synthesis to systems with continuous state-action space. In contrast, the current work addresses a different problem involving discrete-time systems with continuous state and action spaces. The object is to obtain an equivalent abstracted model that preserves the reward structure within the discrete space. In this approach, two reward functions are used to learn the maximally and minimally Q-values, which changes the entire construction done in \cite{borri2023reinforcement}. % This is close to the work in \cite{delimpaltadakis2023interval} study the value iteration for maximizing expected cumulative reward of continuous action interval Markov decision processes. %Additionally, the initialization of the algorithm in the current work must follow the ordering in Assumption 4, whereas in \cite{borri2023reinforcement}, the initialization can start from any point.
\hfill $\diamond$
\end{rem}

%%%%%%%%%%%%%%%%%%
%	\subsection{Q-learning for symbolic models}
%	%%%%%%%%%%%%%%%%%%
%	
%	Now, within the definitions of the maximally and minimally state action value functions the Q-learning control problem of the symbolic model $\Sigma_{\mathrm{D}} $ can be solved by  initializing the two Q-table's state action pairs, $\underline{q}, \overline{q}$ to zero and updating them using the update rules:
%	\begin{subequations}
%		\begin{align}
%			\overline{q}(s_{k},a_{k}) =& (1-\alpha)\overline{q}(s_{k},a_{k}) \nonumber \\
%			&+\alpha\left[\overline{g}(s_{k},a_{k})+ \max\limits_{s'\in \Delta(s,a)}\max\limits_{a' \in \mathcal{A}_{\mathrm{D}}}\gamma Q(s',a')\right] \\
%			\underline{q}(s_{k},a_{k}) =& (1-\alpha)\underline{q}(s_{k},a_{k})\nonumber 
%			\\
%			&+\alpha\left[\underline{g}(s_{k},a_{k})+ \max\limits_{s'\in \Delta(s,a)}\max\limits_{a' \in \mathcal{A}_{\mathrm{D}}}\gamma \underline{q}(s',a')\right] 
%		\end{align}    
%	\end{subequations}

\section{Balancing conservatism with optimal policy approximation in abstraction}
The goal now is to relate the Q-values of the original system $\Sigma$ with the Q-values of the symbolic model $\Sigma_{\mathrm{D}}$. The following assumption regarding the initialisation of the Q-values of $\underline{q}_{\pi}^{(0)}$, $q_{\pi}^{(0)}$ $\overline{q}_{\pi}^{(0)}$ is required to derive the main results.
\begin{assum}\label{Assum4}
The initial Q-values, denoted as $\underline{q}_{\pi}^{(0)}$, $q_{\pi}^{(0)}$, and $\overline{q}_{\pi}^{(0)}$ are equal across all states for each action, i.e., $\underline{q}^0(s,a)=\underline{q}^0(s',a)$, $q^{0}(\xi,v)=q^{0}(\xi',v)$, $\underline{q}^{0}(s,a)=\underline{q}^{0}(s',a)$, $s,s' \in \mathcal{S}_{\mathrm{D}}, \xi,\xi' \in \mathcal{S}$,  and satisfy the following conditions:
\begin{align}\label{EqAssum4}
\underline{q}_{\pi}^{(0)} \leqslant q_{\pi}^{(0)} \leqslant \overline{q}_{\pi}^{(0)}
\end{align}
\end{assum}
%Condition \eqref{EqAssum4} establishes an initial ordering for the initial Q-values%, ensuring well-defined lower and upper bounds of the true Q. 
Note that Assumption \ref{Assum4} is not necessary for convergence of $\overline{q}_{\pi}$ and $\underline{q}_{\pi}$ but serves the purpose of approximating the continuous action value function. The following result show that, under Assumption \ref{Assum4}, the true Q-values remain bounded between $\underline{q}_{\pi}^{(k)}$ and $\overline{q}_{\pi}^{(k)}$ for $\forall k \in \mathbb{N}$.
\begin{thm}\label{THM3}
Under Assumption \ref{Assum4} is satisfied. Let $\xi \in \mathcal{S} $, $s \in S_{\mathrm{D}} $, $v\in \mathcal{A}$, $a\in \mathcal{A}_{\mathrm{D}}$ and a policy $\pi \in \Pi$, let $q(\xi,v)$,  denotes the true Q-function of a continuous state space model $\Sigma$ and let $\underline{q}_{\pi}(s,a)$ and $\overline{q}_{\pi}(s,a)$ be the maximal and minimal Q-values, respectively, obtained by applying the update law \eqref{Eq6} to the symbolic model $\Sigma_{\mathrm{D}}$. Then, the true Q-values satisfies:
\begin{eqnarray}\label{Eq14}
\underline{q}_{\pi}^{(k)}(s,a) \leqslant  q_{\pi}^{(k)}(\xi,v) \leqslant  \overline{q}_{\pi}^{(k)}(s,a).
\end{eqnarray}
\end{thm}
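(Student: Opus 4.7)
The plan is to prove the sandwich inequality \eqref{Eq14} by induction on the iteration index $k$. For the base case $k=0$, Assumption \ref{Assum4} gives the inequality immediately since the initial Q-values are constant across states and already ordered. For the inductive step, I would fix arbitrary $\xi \in \mathcal{S}$, $v \in \mathcal{A}(\xi)$ and associate them with $s \in \mathcal{S}_{\mathrm{D}}$, $a \in \mathcal{A}_{\mathrm{D}}$ such that $\xi \in s$ and $v \in a$. Writing $\xi' = f(\xi, v)$, Proposition \ref{Prop1} guarantees the existence of some $s' \in \Delta(s,a)$ with $\xi' \in s'$, which is the essential bridge between the deterministic continuous transition and the nondeterministic abstract transition.

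For the upper bound, I would bound the continuous update $q_{\pi}^{(k+1)}(\xi,v) = g(\xi,v) + \gamma \max_{v' \in \mathcal{A}(\xi')} q_{\pi}^{(k)}(\xi', v')$ term by term. The definition of $\overline{g}$ gives $g(\xi,v) \leqslant \overline{g}(s,a)$. For every $v' \in \mathcal{A}(\xi')$ there is a discrete cell $a'$ with $v' \in a'$, and the induction hypothesis then gives $q_{\pi}^{(k)}(\xi',v') \leqslant \overline{q}_{\pi}^{(k)}(s', a')$. Taking the supremum over $v'$ and then enlarging the right-hand side by maximising over all $s'' \in \Delta(s,a)$ reproduces exactly the update rule for $\overline{q}_\pi$ in \eqref{Eq6}, yielding $q_{\pi}^{(k+1)}(\xi,v) \leqslant \overline{q}_{\pi}^{(k+1)}(s,a)$.

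The lower bound is proved symmetrically but needs more care because the update for $\underline{q}_\pi$ in \eqref{Eq6} contains an inner $\min$ over successor states. Starting from $g(\xi,v) \geqslant \underline{g}(s,a)$, the induction hypothesis yields $q_{\pi}^{(k)}(\xi',v') \geqslant \underline{q}_{\pi}^{(k)}(s', a')$ for any $v' \in a'$, and taking the maximum over $a' \in \mathcal{A}_{\mathrm{D}}(s')$ gives $\max_{v'} q_{\pi}^{(k)}(\xi',v') \geqslant \max_{a'} \underline{q}_{\pi}^{(k)}(s', a')$. The key observation is that since $s' \in \Delta(s,a)$, this maximum is at least $\min_{s'' \in \Delta(s,a)} \max_{a'} \underline{q}_{\pi}^{(k)}(s'', a')$, which closes the induction.

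The main obstacle is precisely this last step in the lower bound: justifying that replacing the single realised successor $s'$ by the minimum over all nondeterministic successors in $\Delta(s,a)$ preserves the inequality in the correct direction. This is where the asymmetry between the deterministic continuous dynamics and the over-approximating nondeterministic abstraction matters most, and the argument rests entirely on the fact that $\Delta(s,a)$ over-approximates the true reachable set, which is guaranteed by Proposition \ref{Prop1} together with the Lipschitz Assumption \ref{Assum1} used in the construction of $\Sigma_{\mathrm{D}}$.
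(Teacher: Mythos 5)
Your proposal is correct and follows essentially the same route as the paper's proof: induction on $k$ with the base case from Assumption~\ref{Assum4}, the bound $\underline{g}(s,a)\leqslant g(\xi,v)\leqslant\overline{g}(s,a)$ from the definitions in \eqref{Sigma_q}, the matching of $(\xi',v')$ with $(s',a')$ via Proposition~\ref{Prop1}, and the final enlargement by $\max$ (resp.\ $\min$) over $\Delta(s,a)$ using the fact that the realised successor $s'$ belongs to the over-approximating set $\Delta(s,a)$. The step you flag as the main obstacle is handled in the paper in exactly the way you describe.
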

% \textcolor{red}{first to add a result, showing that the system and it's abstraction are related by an alternating simulation relation, and then we use the alternating simulation relation here (see my previous notes)}
\begin{pf}
The proof follows an induction-based approach. When $k = 0$, we have $q_{\pi}^{(0)}(s,a) \leqslant \underline{q}_{\pi}^{(0)}(s,a) \leqslant \overline{q}_{\pi}^{(0)}(s,a)$, by means of Assumption \ref{Assum4} the statement is true. Assume that $\underline{q}_{\pi}^{(k-1)}(s,a) \leqslant  q_{\pi}^{(k-1)}(\xi,v) \leqslant  \overline{q}_{\pi}^{(k-1)}(s,a)$. We want to show that $\underline{q}_{\pi}^{(k)}(s,a) \leqslant  q_{\pi}^{(k)}(\xi,v) \leqslant  \overline{q}_{\pi}^{(k)}(s,a)$ also holds for all $\xi \in s, \; v\in a$ pairs. From the update equations \eqref{update} and \eqref{Eq6}, we have: 
%		\begin{align*}	\underline{q}_{\pi}^{(k)}(s,a)=&(1-\alpha)\underline{q}_{\pi}^{(k-1)}(s,a)+\alpha \left[\underline{g}(s,a) \right.\\
%			&\left. +\gamma \min\limits_{s^{\prime}\in \Delta(s,a)} \max\limits_{a^{\prime}\in\mathcal{A}_{\mathrm{D}}}\underline{q}_{\pi}^{(k-1)}(s^{\prime},a^{\prime})\right],\\
%			q_{\pi}^{(k)}(s,a)=&(1-\alpha)q_{\pi}^{(k-1)}(s,a)+\alpha \left[g(s,a) \right.\\
%			&\left.+\gamma  \max\limits_{b^{\prime}\in\mathcal{A}}q_{\pi}^{(k-1)}(\xi^{\prime},b^{\prime})\right],\\
%			\overline{q}_{\pi}^{(k)}(s,a)=&(1-\alpha)\overline{q}_{\pi}^{(k-1)}(s,a) +\alpha \left[\overline{g}(s,a)\right.\\
%			&\left.+\gamma \max\limits_{s^{\prime}\in \Delta(s,a)} \max\limits_{a^{\prime}\in\mathcal{A}_{\mathrm{D}}}\overline{q}_{\pi}^{(k-1)}(s^{\prime},a^{\prime})\right].
%		\end{align*}
\begin{align*}
\underline{q}_{\pi}^{(k)}(s,a)=& \underline{g}(s,a) +\gamma \min\limits_{s^{\prime}\in \Delta(s,a)} \max\limits_{a^{\prime}\in\mathcal{A}_{\mathrm{D}}(s')}\underline{q}_{\pi}^{(k-1)}(s^{\prime},a^{\prime}),\\
q_{\pi}^{(k)}(\xi,v)=& g(\xi,v) + \gamma  \max\limits_{b^{\prime}\in\mathcal{A}(\xi^{\prime})}q_{\pi}^{(k-1)}(\xi^{\prime},b^{\prime}),\\
\overline{q}_{\pi}^{(k)}(s,a)=&\overline{g}(s,a)+\gamma \max\limits_{s^{\prime}\in \Delta(s,a)} \max\limits_{a^{\prime}\in\mathcal{A}_{\mathrm{D}}(s')}\overline{q}_{\pi}^{(k-1)}(s^{\prime},a^{\prime}).
\end{align*}
The definitions of $ \underline{g}(s,a) $ and $ \overline{g}(s,a) $ in \eqref{Sigma_q} gives $ \underline{g}(s,a) \leqslant  g(\xi,v) \leqslant  \overline{g}(s,a)$. Based on Proposition \ref{Prop1}, we can observe that since $v' \in \mathcal{A}(\xi')$, there exists $a' \in \mathcal{A}_{\mathrm{D}}(s')$ such that $v' \in a'$. Additionally, as $\xi' = f(\xi, v)$, there exists $s' \in \Delta(s, a)$ such that $\xi' \in s'$. By leveraging the assumption at time $k-1$, we can establish the inequality:
$$ \begin{cases}
\max_{a' \in \mathcal{A}_{\mathrm{D}}(s')} \underline{q}_{\pi}^{(k-1)}(s',a') \leqslant \max\limits_{v' \in \mathcal{A}(\xi')} q_{\pi}^{k-1}(\xi',v') \\
\max_{a' \in \mathcal{A}_{\mathrm{D}}(s')} \overline{q}_{\pi}^{(k-1)}(s',a') \geqslant \max\limits_{v' \in \mathcal{A}(\xi')} q_{\pi}^{k-1}(\xi',v')
\end{cases}, $$ which implies that, $$
\resizebox{0.95\hsize}{!}{$\begin{cases}  \min\limits_{s^{\prime}\in \Delta(s,a)} \max_{a' \in \mathcal{A}_{\mathrm{D}}(s')} \underline{q}_{\pi}^{(k-1)}(s',a') \leqslant \max\limits_{v' \in \mathcal{A}(\xi')} q_{\pi}^{k-1}(\xi',v')\\
\max\limits_{s^{\prime}\in \Delta(s,a)} \max_{a' \in \mathcal{A}_{\mathrm{D}}(s')} \overline{q}_{\pi}^{(k-1)}(s',a') \geqslant \max\limits_{v' \in \mathcal{A}(\xi')} q_{\pi}^{k-1}(\xi',v')
\end{cases}.$}	 $$
Thus, for all $k \geqslant 0$, $\underline{q}_{\pi}^{(k)}(s,a) \leqslant  q_{\pi}^{(k)}(\xi,v) \leqslant  \overline{q}_{\pi}^{(k)}(s,a)$ holds, for all $s \in S_{\mathrm{D}}$ and $a \in \mathcal{A}_{q}$. \hfill$\square$
\end{pf}
The result established in Theorem \ref{THM3} is illustrated in Figure \ref{fig:optimalq}. It is shown that the range of the optimal Q-values of system \eqref{Eq1} with continuous state space is bounded between $\underline{q}_{\pi}^{\ast}(s,a)$ and $\overline{q}_{\pi}^{\ast}(s,a)$.
\begin{figure}
\centering
\includegraphics[width=0.7\linewidth]{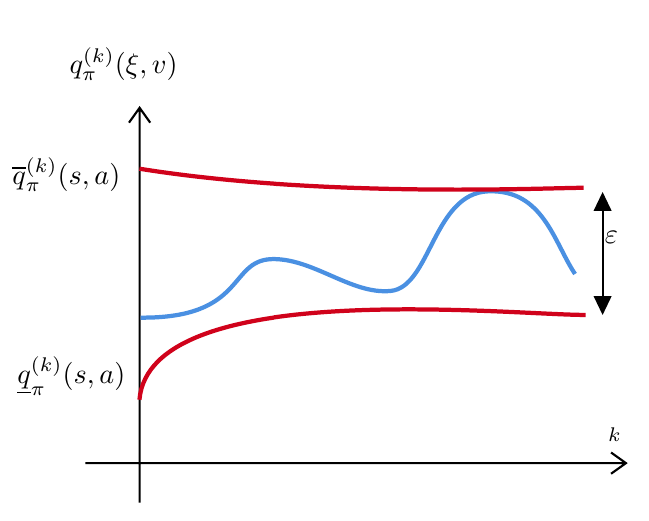}
\caption{The $k^{th}$ Q-value function for a given symbolic state $s$ over the continuous state space containing $\xi$.}
\label{fig:optimalq}
\end{figure}

The following result shows that the continuous Q-values in \eqref{update} satisfies a certain Lipschitz property. 
\begin{thm}
If Assumptions \ref{Assum1}, \ref{Assum2} and \ref{Assum4} are satisfied, then $\forall \xi, \overline{\xi} \in \mathcal{S}$ and $v \in \mathcal{A}(\xi), \overline{v} \in \mathcal{A}(\overline{\xi})$ the Q-value function determined by the update rule \eqref{update} satisfies the following Lipschitz continuity property for all $k\in \mathbb{N}$,
\begin{align}\label{Eq15}
\left|q_{\pi}^{(k)}(\xi, v)-q_{\pi}^{(k)}\left(\overline{\xi}, \overline{v}\right)\right| \leqslant  L_{\xi}^{(k)}\! \|\xi-\overline{\xi}\|\!+ \! L_{v}^{(k)}\|v-\overline{v}\|,
\end{align}
where $ L_{\xi}^{(k)}$ and $L_{v}^{(k)}$ are defined by,
\begin{subequations}\label{Eq20}
\begin{gather}\label{Eq20a}
L_{\xi}^{(k)}=\gamma L_{\xi}^{(k-1)}L_{f \xi}+\gamma L_{v}^{(k-1)} L_{\mathcal{A}}L_{f \xi}+L_{g\xi},\\\label{Eq20b}
L_{v}^{(k)}=\gamma L_{\xi}^{(k-1)}L_{f v}+\gamma L_{v}^{(k-1)} L_{\mathcal{A}}L_{f v}+L_{gv}.
\end{gather}
\end{subequations}

\end{thm}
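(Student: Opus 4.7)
\begin{pf}
The plan is to proceed by induction on $k$, paralleling the recursive structure of \eqref{Eq20}. The base case $k=0$ follows immediately from Assumption~\ref{Assum4}: the initialization is state-independent for each action (and may be taken action-independent as well), so the Lipschitz inequality holds with initial constants $L_{\xi}^{(0)}=L_{v}^{(0)}=0$, which seeds the recursion consistently.

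For the inductive step, I would write out the difference using update rule \eqref{update} with $\xi' = f(\xi,v)$ and $\bar\xi' = f(\bar\xi,\bar v)$,
\begin{align*}
q_{\pi}^{(k)}(\xi,v)-q_{\pi}^{(k)}(\bar\xi,\bar v) &= \bigl[g(\xi,v)-g(\bar\xi,\bar v)\bigr] \\
&\quad + \gamma\Bigl[\max_{b'\in\mathcal{A}(\xi')} q_{\pi}^{(k-1)}(\xi',b') - \max_{\bar b'\in\mathcal{A}(\bar\xi')} q_{\pi}^{(k-1)}(\bar\xi',\bar b')\Bigr],
\end{align*}
and treat the two bracketed terms separately. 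The reward term is immediately bounded by Assumption~\ref{Assum2}, contributing $L_{g\xi}\|\xi-\bar\xi\|+L_{gv}\|v-\bar v\|$.

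The delicate part is the second term, because the two maxima are taken over possibly different admissible sets $\mathcal{A}(\xi')$ and $\mathcal{A}(\bar\xi')$. Here the key tool is Lemma~\ref{Lemma1}: for any $b^{\ast}$ attaining the first maximum, there exists $\bar b^{\ast}\in\mathcal{A}(\bar\xi')$ with $\|b^{\ast}-\bar b^{\ast}\|\leqslant L_{\mathcal{A}}\|\xi'-\bar\xi'\|$, and symmetrically for the reverse direction. Using $\max_{\bar b'}q_{\pi}^{(k-1)}(\bar\xi',\bar b')\geqslant q_{\pi}^{(k-1)}(\bar\xi',\bar b^{\ast})$ and the induction hypothesis applied to the pair $(\xi',b^{\ast})$ and $(\bar\xi',\bar b^{\ast})$, I obtain
\[
\Bigl|\max_{b'\in\mathcal{A}(\xi')} q_{\pi}^{(k-1)}(\xi',b') - \max_{\bar b'\in\mathcal{A}(\bar\xi')} q_{\pi}^{(k-1)}(\bar\xi',\bar b')\Bigr| \leqslant \bigl(L_{\xi}^{(k-1)}+L_{v}^{(k-1)}L_{\mathcal{A}}\bigr)\|\xi'-\bar\xi'\|.
\]
Finally I would apply Assumption~\ref{Assum1} to bound $\|\xi'-\bar\xi'\|=\|f(\xi,v)-f(\bar\xi,\bar v)\|\leqslant L_{f\xi}\|\xi-\bar\xi\|+L_{fv}\|v-\bar v\|$, collect terms proportional to $\|\xi-\bar\xi\|$ and $\|v-\bar v\|$ respectively, and read off exactly the coefficients in \eqref{Eq20a}--\eqref{Eq20b}.

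The main obstacle is the max-over-different-sets step: one must be careful that the argmax in $\mathcal{A}(\xi')$ is not admissible in $\mathcal{A}(\bar\xi')$, and Lemma~\ref{Lemma1} is precisely what allows the substitution at the cost of the extra $L_{\mathcal{A}}$ factor that propagates into the $L_{v}^{(k-1)}L_{\mathcal{A}}$ terms of the recursion. Everything else is routine application of the triangle inequality and the two Lipschitz assumptions. \hfill $\square$
\end{pf}
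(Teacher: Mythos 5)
Your proposal is correct and follows essentially the same route as the paper's proof: induction on $k$, splitting the update into the reward term (bounded via Assumption~\ref{Assum2}) and the discounted max term, using compactness to pick maximizers, Lemma~\ref{Lemma1} to convert the action distance into $L_{\mathcal{A}}\|\xi'-\overline{\xi}'\|$, and Assumption~\ref{Assum1} to propagate back to $\|\xi-\overline{\xi}\|$ and $\|v-\overline{v}\|$. The only cosmetic difference is that you symmetrize the comparison of the two maxima while the paper directly compares the values at the two maximizers, and you start the induction at $k=0$ with $L_{\xi}^{(0)}=L_{v}^{(0)}=0$ rather than at $k=1$; both yield the same recursion \eqref{Eq20}.
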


\begin{pf}
The proof follows an induction-based approach. When $k=1$, the update rule \eqref{update} and condition \eqref{EqAssum2} show that for any $\xi, \overline{\xi} \in \mathcal{S}$ and $ v \in \mathcal{A}(\xi)$, $\overline{v} \in \mathcal{A}(\overline{\xi})$ we have, 	
\begin{align}\label{Eq21;init}
\left|q_{\pi}^{(1)}(\xi, v)-q_{\pi}^{(1)}(\overline{\xi}, \overline{v})\right| \leqslant  |g(\xi,v)-g(\overline{\xi}, \overline{v})| \nonumber\\
+  \gamma \left| \max\limits_{v \in \mathcal{A}(\xi)} q_{\pi}^{(0)}(\xi,v) - \max\limits_{\overline{v} \in \mathcal{A}(\overline{\xi})} q_{\pi}^{(0)}(\overline{\xi},\overline{v}) \right|.
\end{align}
The first term on the right and side of \eqref{Eq21;init} implies that, 
\begin{align}
|g(\xi,v)-g(\overline{\xi}, \overline{v})|\leqslant L_{g \xi}\|\xi-\overline{\xi}\|+L_{g v}\|v-\overline{v}\|.
\end{align}
Under Assumption \ref{Assum4}, the second term yields, 
\begin{align}
\left| \max\limits_{v \in \mathcal{A}(\xi)} q_{\pi}^{(0)}(\xi,v) - \max\limits_{\overline{v} \in \mathcal{A}(\overline{\xi})} q_{\pi}^{(0)}(\overline{\xi},\overline{v}) \right| =0 .
%\nonumber\\
% \underset{L_{\xi}^{(0)}}{\underbrace{C^{te} \frac{1}{\|\xi-\overline{\xi}\|}}}\|\xi-\overline{\xi}\|
%+\underset{L_{v}^{(0)}}{\underbrace{C^{te} \frac{1}{\|v-\overline{v}\|}}}\|v-\overline{v}\|.
\end{align}
Thus, it follows that, 
\begin{align}\label{Eq21}
\left|q_{\pi}^{(1)}(\xi, v)-q_{\pi}^{(1)}(\overline{\xi}, \overline{v})\right| \leqslant   L_{g \xi}\|\xi-\overline{\xi}\|+L_{g v}\|v-\overline{v}\|.
\end{align}
Suppose that for every $\xi, \overline{\xi} \in \mathcal{S}$, $v \in \mathcal{A}(\xi)$, $\overline{v} \in \mathcal{A}(\overline{\xi})$, 
\begin{align}\label{Eq17}
\left|q_{\pi}^{(k-1)}\!\!(\xi,\! v)\!-\! q_{\pi}^{(k-1)}\!\!\left(\overline{\xi}, \!\overline{v}\right)\!\right| \!\! \leqslant  \!\! L_{\xi}^{(k-1)}\!\!\left\|\xi-\overline{\xi}\right\| \!\!+\!\! L_{v}^{(k-1)}\!\!\left\|v-\overline{v}\right\|,
\end{align}
holds for all $(\xi,v)$ pairs. To show that \eqref{Eq17} holds for $k$, consider the update law \eqref{update}. Using the triangular inequality, we have, 
\begin{align}\label{Eq18}
&\hspace{-10pt}\left|q_{\pi}^{(k)}(\xi, v)-q_{\pi}^{(k)}\left(\overline{\xi}, \overline{v}\right)\right| \leqslant  \left|g(\xi,v) -g(\overline{\xi},\overline{v})\right| \nonumber\\
&\hspace{10pt}+\gamma\left|\max\limits_{v^{\prime} \in \mathcal{A}(\xi^{\prime})}\!\! q_{\pi}^{(k-1)}(\xi^{\prime},v^{\prime})-\!\!\!\!\max\limits_{\overline{v} \in \mathcal{A}(\overline{\xi}^{\prime})} q_{\pi}^{(k-1)}(\overline{\xi}^{\prime},\overline{v}^{\prime}) \right|.
\end{align}
Because $\mathcal{A}(\xi^{\prime})$ and $\mathcal{A}(\overline{\xi}^{\prime})$ are compact sets, there exists $v^{\prime} \in \mathcal{A}(\xi^{\prime})$ s.t. $q_{\pi}^{(k-1)}(\xi^{\prime}, v^{\prime})=\max\limits_{v^{\prime} \in \mathcal{A}\left(\xi^{\prime}\right)} q_{\pi}^{(k-1)}(\xi^{\prime}, v^{\prime})$ and $\overline{v}^{\prime} \in \mathcal{A}(\overline{\xi}^{\prime})$ s.t. $q_{\pi}^{(k-1)}(\overline{\xi}^{\prime}, \overline{v}^{\prime})=\max\limits_{\overline{v}^{\prime} \in \mathcal{A}\left(\overline{\xi}^{\prime}\right)} q_{\pi}^{(k-1)}(\overline{\xi}^{\prime}, \overline{v}^{\prime})$, the right hand side of \eqref{Eq18} implies the chain of inequalities \eqref{Eq19}.
% \begin{figure*}
% 	\begin{eqnarray*}
% 		\hline
% 	\end{eqnarray*}
\begin{align}\label{Eq19}
	&\left|q_{\pi}^{(k)}(\xi, v)-q_{\pi}^{(k)}\left(\overline{\xi}, \overline{a}\right)\right| \leqslant  L_{g\xi} \|\xi-\overline{\xi}\|+L_{gv}\|v-\overline{v}\| \nonumber\\ &+\gamma\!\left(q_{\pi}^{(k-1)}(\xi^{\prime}, v^{\prime})\!-\! q_{\pi}^{(k-1)}(\overline{\xi}^{\prime}, \overline{v}^{\prime})\!\right)\nonumber\\
	\leqslant & L_{g\xi} \|\xi-\overline{\xi}\|+L_{gv}\|v-\overline{v}\| \nonumber\\ &+\gamma \left(L_{\xi}^{(k-1)}\|\xi^{\prime}-\overline{\xi}^{\prime}\| +L_{v}^{(k-1)}\|v^{\prime}-\overline{v}^{\prime}\|\right)\nonumber\\
	\leqslant &  L_{g\xi} \|\xi-\overline{\xi}\|+L_{gv}\|v-\overline{v}\| \nonumber\\ 
 &+ \gamma\left( L_{\xi}^{(k-1)}(L_{f \xi}\|\xi-\overline{\xi}\|+L_{f v}\|v-\overline{v}\|) \right.\nonumber\\
 &+ \left. \!  \gamma L_{v}^{(k-1)}  L_{\mathcal A}\|\xi^{\prime}\!-\!\overline{\xi}^{\prime}\!\|\right) \nonumber\\
	\leqslant &  L_{g\xi} \|\xi-\overline{\xi}\|+L_{gv}\|v-\overline{v}\| \nonumber\\ 
	&+ (\gamma L_{\xi}^{(k-1)}L_{f \xi}+\gamma L_{v}^{(k-1)}  L_{\mathcal A}L_{f \xi}) \|\xi-\overline{\xi}\|  \nonumber\\
 &+ (\gamma L_{\xi}^{(k-1)}L_{f v}+\gamma L_{v}^{(k-1)}  L_{\mathcal A}L_{f v}) \|v-\overline{v}\|  \nonumber\\
	= &  (\gamma L_{\xi}^{(k-1)}L_{f \xi}+\gamma L_{v}^{(k-1)}  L_{\mathcal A}L_{f \xi}+L_{g\xi}) \|\xi-\overline{\xi}\|\nonumber\\ &+ (\gamma L_{\xi}^{(k-1)}L_{f v}+\gamma L_{v}^{(k-1)}  L_{\mathcal A}L_{f v}+L_{gv}) \|v-\overline{v}\| \nonumber\\
	= & L_{\xi}^{(k)}\! \|\xi-\overline{\xi}\|\!+ \! L_{v}^{(k)}\|v-\overline{v}\|.
\end{align}
% 	\begin{eqnarray*}
% 		\hline
% 	\end{eqnarray*}
% \end{figure*}
The third inequality is obtained by utilizing \eqref{EqAssum1} and \eqref{EqLemma1}, while the fourth inequality is derived from the repeated use of \eqref{EqAssum1}. Thus, for any $\xi, \overline{\xi} \in \mathcal{S}$ and action pairs $v \in \mathcal{A}(\xi)$, $\overline{v} \in \mathcal{A}(\overline{\xi})$, we have, 
$\left|q_{\pi}^{(k)}(\xi, v)-q_{\pi}^{(k)}\left(\overline{\xi}, \overline{v}\right)\right| \leqslant  L_{\xi}^{(k)}\! \|\xi-\overline{\xi}\|\!+ \! L_{v}^{(k)}\|v-\overline{v}\|$. By induction, this holds for all $k\geqslant 0 $. \hfill $\square$
\end{pf}

The following result make it possible to measure the mismatch between the Q-values of the symbolic model in \eqref{Eq6} as a function of the discretisation parameters $\eta$ and $\mu$.

%Let $\eta $ and  $ \mu $ be positive constants satisfying: $\forall \; \xi \in \mathcal{S} $, there exist an $s \in S_{\mathrm{D}}$ such that $\|\xi-s\| \leqslant  \eta$ and for every $a \in \mathcal{A}(\xi)$, there exist an $ \overline{a} \in \mathcal{A}_{\mathrm{D}}(s) $ satisfying $ \|a-\overline{a}\|\leqslant  \mu $

\begin{thm}\label{THM5}
Consider the symbolic model \eqref{Sigma_q} with positive constants $\eta$ and $\mu$. If Assumptions \ref{Assum1}, \ref{Assum2} and \ref{Assum4} are satisfied, then for every $k \in \mathbb{N}$, $\underline{q}(s,a)$ and $ \overline{q}(s,a)$ satisfy,
\begin{align}\label{Eq23}
\left|\underline{q}_{\pi}^{(k)}(s,a) - \overline{q}_{\pi}^{(k)}(s,a)\right|\leqslant   L_{\xi}^{(k)} \eta+ L_{v}^{(k)}\mu,
\end{align}
where $ L_{\xi}^{(k)} $ and $ L_{v}^{(k)}$ are as defined in \eqref{Eq20}.
\end{thm}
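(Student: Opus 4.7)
The plan is to prove Theorem \ref{THM5} by induction on $k$, combining Theorem \ref{THM3} (which sandwiches the continuous Q-value between the symbolic bounds), the Lipschitz estimate \eqref{Eq15} for the continuous $q_\pi^{(k)}$, Lemma \ref{Lemma1} (to translate action distances into state distances), and the ball-based definition of $\Delta(s,a)$.

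\textbf{Base case.} For $k=0$, Assumption \ref{Assum4} enforces $\overline{q}_\pi^{(0)}(s,a) = \underline{q}_\pi^{(0)}(s,a)$ (all initial Q-values take a common value that is constant in the state, for each action), so the inequality holds with $L_\xi^{(0)}=L_v^{(0)}=0$. This initialisation is consistent with the recursion \eqref{Eq20}, whose first iterate is $L_\xi^{(1)}=L_{g\xi}$ and $L_v^{(1)}=L_{gv}$.

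\textbf{Inductive step.} Assume the bound at $k-1$ uniformly in $(s',a')$. Subtracting the two Bellman updates \eqref{Eq6} yields $\overline{q}_\pi^{(k)}(s,a) - \underline{q}_\pi^{(k)}(s,a) = [\overline{g}(s,a) - \underline{g}(s,a)] + \gamma(T^{+}-T^{-})$, where $T^{+}$ and $T^{-}$ denote the $\max_{s_1'}\max_{a_1'}\overline{q}_\pi^{(k-1)}$ and $\min_{s_2'}\max_{a_2'}\underline{q}_\pi^{(k-1)}$ quantities over $\Delta(s,a)$. The reward gap is bounded by $L_{g\xi}\eta + L_{gv}\mu$ through Assumption \ref{Assum2} and the diameters $\eta$ and $\mu$ of the intervals $s$ and $a$. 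For $T^{+}-T^{-}$, I select extremisers $(s_1^*, a_1^*)$ and $(s_2^*, a_2^*)$; since both cells lie in $\Delta(s,a)$, they intersect the ball $B(f(s_c,a_c), L_{f\xi}\eta+L_{fv}\mu)$, so representatives $\xi_i^* \in s_i^* \cap B$ and $v_i^* \in a_i^*$ can be chosen. Telescoping through $q_\pi^{(k-1)}(\xi_i^*, v_i^*)$, Theorem \ref{THM3} combined with the inductive hypothesis controls the two symbolic-to-continuous end gaps, while \eqref{Eq15} bounds the middle continuous-to-continuous gap by $L_\xi^{(k-1)}\|\xi_1^*-\xi_2^*\| + L_v^{(k-1)}\|v_1^*-v_2^*\|$. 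A triangle inequality through the ball centre $f(s_c,a_c)$ yields $\|\xi_1^*-\xi_2^*\| \le L_{f\xi}\eta + L_{fv}\mu$, and Lemma \ref{Lemma1} rewrites the action gap as $L_\mathcal{A}\|\xi_1^*-\xi_2^*\|$. Reassembling these contributions reproduces exactly the recursion \eqref{Eq20a}--\eqref{Eq20b} for $L_\xi^{(k)}\eta + L_v^{(k)}\mu$.

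\textbf{Main obstacle.} The delicate point is the asymmetric $\max/\min$ structure of $T^{+}-T^{-}$: the two extremising cells sit at potentially different places inside the reachable ball, so a naive triangle inequality between two arbitrary points would inflate the radius by a factor of two and break the clean form of \eqref{Eq20}. Cancelling this factor requires committing one reference endpoint to $f(s_c,a_c)$ itself and absorbing the residual symbolic-to-continuous error (from applying Theorem \ref{THM3} twice) into the inductive hypothesis, so that the coefficients $L_\xi^{(k-1)}(L_{f\xi}\eta+L_{fv}\mu)$ and $L_v^{(k-1)}L_\mathcal{A}(L_{f\xi}\eta+L_{fv}\mu)$ in \eqref{Eq20} emerge precisely. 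The remaining arithmetic is the same composition of Lipschitz constants already carried out in the proof of \eqref{Eq15}.
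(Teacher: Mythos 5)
Your decomposition correctly isolates the real difficulty in this theorem --- the term $\gamma\,(T^{+}-T^{-})$, where the maximising and minimising successor cells inside $\Delta(s,a)$ are in general different --- but the resolution you sketch does not close the induction with the constants \eqref{Eq20}. Unwinding your telescoping: the middle continuous-to-continuous gap contributes $\gamma\bigl(L_{\xi}^{(k-1)}+L_{v}^{(k-1)}L_{\mathcal A}\bigr)\|\xi_1^{*}-\xi_2^{*}\|$, where $\|\xi_1^{*}-\xi_2^{*}\|$ is bounded only by the \emph{diameter} $2(L_{f\xi}\eta+L_{fv}\mu)$ of the reachability ball (the two extremising cells need not contain the centre $f(s_c,a_c)$, so you cannot "commit one endpoint" to it), and the two end gaps contribute, via Theorem \ref{THM3} plus the inductive hypothesis, an additional $2\gamma\bigl(L_{\xi}^{(k-1)}\eta+L_{v}^{(k-1)}\mu\bigr)$. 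The recursion \eqref{Eq20} leaves exactly zero slack for either excess: expanding it gives $L_{\xi}^{(k)}\eta+L_{v}^{(k)}\mu = L_{g\xi}\eta+L_{gv}\mu + \gamma\bigl(L_{\xi}^{(k-1)}+L_{v}^{(k-1)}L_{\mathcal A}\bigr)(L_{f\xi}\eta+L_{fv}\mu)$, with a single factor of the radius and no standalone $\gamma L_{\xi}^{(k-1)}\eta$ term. "Absorbing the residual symbolic-to-continuous error into the inductive hypothesis" is not a valid move --- those are additional nonnegative terms sitting on top of a bound that is already tight against the target. Your base case is also off: Assumption \ref{Assum4} only orders $\underline{q}^{(0)}\leqslant q^{(0)}\leqslant\overline{q}^{(0)}$ and makes each table constant across states; it does not force $\underline{q}^{(0)}=\overline{q}^{(0)}$, so $L_{\xi}^{(0)}=L_{v}^{(0)}=0$ is not justified as stated.

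For comparison, the paper's proof avoids your telescoping entirely: it starts the induction at $k=1$, bounds the reward gap by $L_{g\xi}\eta+L_{gv}\mu$ by picking extremising points of $g$ inside the cell exactly as you do, and then applies the inductive bound \eqref{Eq28,k-1} directly to the difference of the two nested optimisations over $\Delta(s,a)$ --- never passing through the continuous Q-function, Theorem \ref{THM3}, Lemma \ref{Lemma1}, or the ball geometry. That route is terser (and itself glosses over the fact that the outer $\max_{s'}$ and $\min_{s'}$ are attained at different successor cells, which is precisely the obstacle you flagged), but it is the intended argument. To repair your version you would need either that direct step or a strengthened inductive hypothesis controlling the spread of $\overline{q}^{(k-1)}$ and $\underline{q}^{(k-1)}$ across distinct cells of $\Delta(s,a)$, not just their pointwise difference.
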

\begin{proof}
The proof follows an induction-based approach. When $k=1$, the update rule \eqref{Eq6}, Assumption \ref{Assum4} and the condition \eqref{EqAssum2} show that for any $s, \overline{s} \in \mathcal{S}_{\mathrm{D}}$ and $ a\in \mathcal{A}_{\mathrm{D}}(s)$, then, 
\begin{align}\label{Eq27}
\left|\overline{q}_{\pi}^{(1)}(s, a)-\underline{q}_{\pi}^{(1)}(s, a)\right| \leqslant & |\underline{g}(s,a)-\overline{g}(s, a)|.
\end{align}
From the definitions of $\overline{g}$ and $\underline{g}$ in \eqref{Sigma_q} there exists $\xi, \overline{\xi} \in s$ and $v, \overline{v} \in a$ satisfying $ g(\xi, v)=\underline{g}(s,a) \text{ and } g(\overline{\xi}, \overline{v})=\overline{g}(s,a) $. Thus it follows from \eqref{Eq27}, the triangle inequality, and the fact that for every $\xi, \overline{\xi} \in s$, and for every $v, \overline{v} \in a$, that $\|\xi- \overline{\xi}\|\leqslant \eta, \; \|v- \overline{v}\| \leqslant \mu$ hence, 
\begin{align}\label{Eq28}
\left|\overline{q}_{\pi}^{(1)}(s, v)-\underline{q}_{\pi}^{(1)}(\overline{s}, \overline{a})\right| &\leqslant |g(\xi,v)-g(\overline{\xi}, \overline{v})| \nonumber\\
&\leqslant L_{g\xi}\|\xi-\overline{\xi}\| + L_{g v} \|v-\overline{v}\|\nonumber \\
&\leqslant L_{g\xi}\eta+L_{g v}\mu , 
\end{align} 
which satisfies \eqref{Eq23}.
%%%%%%%%%%%%%%%%%%%%%%%%%%%%%%%%%%%%%%%%%%%%%%%%%%%%%%%%%
Now suppose that for every $s \in \mathcal{S}_{\mathrm{D}}$, $a \in \mathcal{A}_{\mathrm{D}}(s)$,  
\begin{align}\label{Eq28,k-1}
\left|\underline{q}_{\pi}^{(k-1)}\!(s,\! a)\!-\! \overline{q}_{\pi}^{(k-1)}\!\left(s, \!a\right)\!\right| \! \leqslant \! L_{\xi}^{(k-1)}\!\eta \!+\! L_{v}^{(k-1)}\!\mu,
\end{align}
holds for all $(s,a)$ pairs. To show that \eqref{Eq23} also holds for $k$, consider the update law \eqref{Eq6}. Using the triangular inequality, one gets, 
\begin{align}\label{Eq29,k}
&\hspace{-10pt}\left|\underline{q}_{\pi}^{(k)}(s, a)-\overline{q}_{\pi}^{(k)}\left(s, a\right)\right| \leqslant \left|\underline{g}(s, a)-\overline{g}(s, a)\right| \nonumber\\
&+\gamma\left|\min\limits_{s^{\prime} \in \Delta(s,a)} \max\limits_{a^{\prime} \in \mathcal{A}_{\mathrm{D}}(s^{\prime})}\!\! \underline{q}_{\pi}^{(k-1)}(s^{\prime},a^{\prime})\right. \nonumber\\
&\;\;\;\;\;\;\;-\left. \max\limits_{s^{\prime} \in \Delta(s,a) }\max\limits_{\overline{a} \in \mathcal{A}(s^{\prime})} \overline{q}_{\pi}^{(k-1)}(s^{\prime},\overline{a}^{\prime}) \right|.
\end{align}
Using the same procedure in equation \eqref{Eq29,k} for the first term on the right hand side of  \eqref{Eq28} and use the inequality \eqref{Eq28,k-1} for the second term, yields,
\begin{align}\label{Eq29}
\hspace{-10pt}\left|\underline{q}_{\pi}^{(k)}(s, a)-\overline{q}_{\pi}^{(k)}\left(s, a\right)\right| \leqslant & L_{g\xi}\eta+L_{g v}\mu \nonumber\\
&+ \! L_{\xi}^{(k-1)}\!\eta \!+\! L_{v}^{(k-1)}\!\mu,
\end{align}
Factoring by $\eta$ and $\mu$ gives the inequality \eqref{Eq23}. This completes the proof. \hfill 
\end{proof}

\begin{rem} It is important to emphasize that:
\begin{itemize}
\item The tightness of the difference $\left|\underline{q}_{\pi}^{(k)}(s,a) - \overline{q}_{\pi}^{(k)}(s,a)\right|$ indicates the convergence of the estimates of the optimal action value function, which in turn makes the extracted policies from $\underline{q}_{\pi}^{(k)}(s,a) $ and $\overline{q}_{\pi}^{(k)}(s,a) $ similar. 
\item The result presented in Theorem \ref{THM5} provides valuable insights to select the appropriate discretization parameters $\eta$, $\mu$, the discount factor $\gamma$ and the number of episodes when applying the proposed Q-learning method to achieve a certain precision $\varepsilon$. The result highlights two cases that are worth noting:
\begin{enumerate}
\item When $L_{\xi}^{(k)} \geqslant 1$ for all $k \in \mathbb{N}$, which indicates that the system \eqref{Eq20} is forward complete (FC) \cite{pola2016symbolic}, and $L_{v}^{(k)} \leqslant C^{te}$ for all $k \in \mathbb{N}$, with $C^{te} \in \mathbb{R}_{\geqslant 0}$, the inequality \eqref{Eq23} is bounded by $\max\limits_{l \in 1,\dots, k}L_{\xi}^l \eta+L_{\xi}^l L_{v}^{k}\mu$. Hence, to maintain a desired precision (e.g., within $\varepsilon$), it is crucial to choose $\eta$ small when the number of episodes is higher.
\item When $L_{\xi}^{(k)} < 1$ and $L_{v}^{(k)} \leqslant C^{te}$ for all $k \in \mathbb{N}_{> 0}$, with $C^{te} \in \mathbb{R}_{\geqslant 0}$: In this case, inequality \eqref{Eq23} implies that there exists an $\varepsilon \in \mathbb{R}_{\geqslant 0}$ such that $\eta + C^{te}\mu \leqslant \varepsilon$. Thus, system \eqref{Eq23} exhibits the well-known incremental Input-to-State Stability ($\delta$-ISS) property, established in prior works \cite{sontag1998mathematical,khalil2002}. This property means that the bounds $\underline{q}_{\pi}$ and $\overline{q}_{\pi}$ become progressively tighter with each iteration. It implies that to achieve a desired precision ($\varepsilon$), the discretization parameter does not depend on the number of episodes.
% However, in the previous case, it depends on the number of episodes, and for a fixed precision, the larger the number of episodes, the smaller the state-space discretization parameter $\eta$ should be.
\end{enumerate}
\end{itemize}

\hfill $\diamond$
\end{rem}

The following corollary establishes a relation between the precision between the true Q-values obtained by \eqref{update} and the minimal and maximal Q-values for the symbolic model obtained from \eqref{Eq6}.

\begin{cor}\label{Cor3}
Under Assumptions \ref{Compact}, \ref{Assum1},\ref{Assum2},\ref{Assum4}, given a number of episodes \(N\) and a desired precision \(\varepsilon>0\), if the discretization parameters \(\eta\) and \(\mu\) are chosen such that,
\begin{align}\label{Eq34}
L_{\xi}^{\max}\eta + L_{v}^{\max}\mu \leqslant \varepsilon,
\end{align}
where 
$L_{v}^{max}\geqslant L_{v}^{(k)}$ and $L_{\xi}^{max}\geqslant L_{\xi}^{(k)}$ for all $\xi\in s, v\in a$, \(k = 1, 2, \dots, N\), it follows that, 
\begin{subequations}\label{Eq35}
\begin{gather}
\left|\underline{q}_{\pi}^{(k)}(s,a)-\overline{q}_{\pi}^{(k)}(s,a)\right| \leqslant \varepsilon,\\ \label{Eq26.2}
\left|q_{\pi}^{(k)}(\xi,v)-\overline{q}_{\pi}^{(k)}(s,a)\right| \leqslant \varepsilon, \\ \label{Eq26.3}
\left|\underline{q}_{\pi}^{(k)}(s,a)-q_{\pi}^{(k)}(\xi,v)\right| \leqslant \varepsilon.
\end{gather}
\end{subequations}
\end{cor}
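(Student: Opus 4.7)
The plan is to derive all three bounds in \eqref{Eq35} as immediate consequences of Theorems \ref{THM3} and \ref{THM5} together with the uniform hypothesis \eqref{Eq34}, with essentially no new machinery required beyond a monotonicity argument and a sandwich argument.

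First, I would establish the inequality $\left|\underline{q}_{\pi}^{(k)}(s,a)-\overline{q}_{\pi}^{(k)}(s,a)\right|\leqslant \varepsilon$. Theorem \ref{THM5} already provides, for every $k \in \mathbb{N}_{>0}$, the bound
\begin{equation*}
\left|\underline{q}_{\pi}^{(k)}(s,a) - \overline{q}_{\pi}^{(k)}(s,a)\right| \leqslant L_{\xi}^{(k)}\eta + L_{v}^{(k)}\mu.
\end{equation*}
Because $\eta,\mu\geqslant 0$ and, by hypothesis, $L_{\xi}^{\max}\geqslant L_{\xi}^{(k)}$ and $L_{v}^{\max}\geqslant L_{v}^{(k)}$ for all $k\in\{1,\dots,N\}$, monotonicity yields $L_{\xi}^{(k)}\eta + L_{v}^{(k)}\mu \leqslant L_{\xi}^{\max}\eta + L_{v}^{\max}\mu$, and the right-hand side is at most $\varepsilon$ by \eqref{Eq34}. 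This gives the first inequality in \eqref{Eq35}.

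Next, to obtain \eqref{Eq26.2} and \eqref{Eq26.3}, I would invoke Theorem \ref{THM3}, which guarantees the sandwich $\underline{q}_{\pi}^{(k)}(s,a) \leqslant q_{\pi}^{(k)}(\xi,v) \leqslant \overline{q}_{\pi}^{(k)}(s,a)$ whenever $\xi\in s$ and $v\in a$. The two nonnegative quantities $\overline{q}_{\pi}^{(k)}(s,a) - q_{\pi}^{(k)}(\xi,v)$ and $q_{\pi}^{(k)}(\xi,v) - \underline{q}_{\pi}^{(k)}(s,a)$ are then each dominated by the total gap $\overline{q}_{\pi}^{(k)}(s,a)-\underline{q}_{\pi}^{(k)}(s,a)$. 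Taking absolute values and applying the bound proved in the previous paragraph immediately delivers \eqref{Eq26.2} and \eqref{Eq26.3}.

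There is no genuine obstacle in this argument; the heavy lifting has already been done by Theorems \ref{THM3} and \ref{THM5}. The only point requiring a brief verification is that the hypotheses of those two theorems are available uniformly over $k\in\{1,\dots,N\}$, which follows directly from the fact that Assumptions \ref{Compact}, \ref{Assum1}, \ref{Assum2}, and \ref{Assum4} are stated globally and do not depend on $k$. The corollary therefore reduces to a short chain combining monotonicity in the Lipschitz-like constants with the sandwich from Theorem \ref{THM3}.
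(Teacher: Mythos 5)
Your proposal is correct and follows essentially the same route as the paper: the paper's own proof simply invokes \eqref{Eq34} as an upper bound on the right-hand side of \eqref{Eq23}, and the sandwich from Theorem \ref{THM3} that you use for \eqref{Eq26.2} and \eqref{Eq26.3} is exactly the implicit step the paper leaves to the reader. Your version is just a more explicit write-up of the same argument.
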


\begin{pf}
The proof is straightforward from the use of \eqref{Eq34} as an upper bound of the right hand side of \eqref{Eq23}. \hfill
\end{pf}

\begin{rem}
Conditions \eqref{Eq35} express how  far are the estimated  Q-values of discrete symbolic model \eqref{Eq6} from the true Q-values of continuous state action spaces, e.g., \textcolor{black}{if $\left|\underline{q}_{\pi}^{(k)}(s,a) - \overline{q}_{\pi}^{(k)}(s,a)\right|=0$ the value loss is null and the extracted policy is optimal}.  Given a number of episodes $N$ and precision $\varepsilon>0$, appropriately choosing discretization parameters $\eta$ and $\mu$ ensures an $\varepsilon$-approximation of true Q-values. %Convergence depends on \eqref{Eq34}, involving maximum Lipschitz constants $L_{\xi}^{\max}$ and $L_{v}^{\max}$. The distance $ \left|\underline{q}_{\pi}^{(k)}(s,a)-\overline{q}_{\pi}^{(k)}(s,a)\right|$ depends on $L_{v}^{\max} \geqslant L_{v}^{(k)}$ and $L_{\xi}^{\max} \geqslant L_{\xi}^{(k)}$ for all $k = 1, 2, \dots, N$, ensuring uniformity.  This insight precise that the estimated Q-value are accurate within $\varepsilon$. 
% If $\eta$ and $\mu$ satisfy $L_{\xi}^{\max}\eta + L_{v}^{\max}\mu \leqslant \varepsilon$, the distance between the maximal and minimal Q-value of a discrete state bounds the true Q-value within $\varepsilon$. Convergence relies on $L_{v}^{\max} \geqslant L_{v}^{(k)}$ and $L_{\xi}^{\max} \geqslant L_{\xi}^{(k)}$ for all $k = 1, 2, \dots, N$, ensuring uniformity. 
\end{rem}
Figure \ref{fig:optimalq} provides a visual representation of the key results derived from Theorem \ref{THM3} and Corollary \ref{Cor3}. In Theorem \ref{THM3}, it is proven that the continuous Q-values are bounded by the lower and upper bounds, $\underline{q}_{\pi}^{(k)}$ and $\overline{q}_{\pi}^{(k)}$, respectively. This demonstrates the range within which the Q-values can vary. Corollary \ref{Cor3} complements the theorem by establishing an important property. It shows that the maximum difference between the lower and upper bounds is bound by the value $\varepsilon$. This bound is determined by the dependencies on $\mu$ and $\eta$. Further analysis focuses on understanding the behavior of the quantity $\left|\underline{q}_{\pi}^{(k)}(s,a)-\overline{q}_{\pi}^{(k)}(s,a)\right|$ and its asymptotic stability. The analysis explores how this quantity approaches a stable value over time. Such stability analysis can be performed on the basis of the stability of the difference equation \eqref{Eq20}, using the well-established result on the stability of linear difference systems by \cite{LaSalle1986}. The following theorem provides a direct result on the choice of the discount factor, and the reward gains associated to equation \eqref{Eq20}, making it possible to learn Q values and achieve a certain accuracy \(\varepsilon\) that tend to $ L_{g\xi}\eta+L_{gv}\mu$.

%	Note that the result presented in Corollary \ref{Cor} implies convergence of the proposed Q-learning Algorithm \ref{alg:q_learning_symbolic} to the optimal policy. 
%	The following remark discusses the achieved precision using the proposed symbolic Q-learning method, Algorithm \ref{alg:q_learning_symbolic}.

\begin{thm}
Consider the difference equation \eqref{Eq20}, given a discount factor $\gamma \in \left(0,1\right)$ if there exists positive definite matrix $P$ with appropriate dimension such that, 
\begin{align}\label{Eq37}
\gamma^{2}\left(\begin{array}{cc}
L_{f\xi}     &    L_{\mathcal A} \\
L_{f\xi}     &   L_{\mathcal A}
\end{array}\right)^{T}P\left(\begin{array}{cc}
L_{f\xi}     &    L_{\mathcal A} \\
L_{f\xi}     &   L_{\mathcal A}
\end{array}\right)-P < 0
\end{align}
then, equation \eqref{Eq20} is asymptotically stable and the precision $\varepsilon $ tend to the equilibrium $L_{0}= \left[\begin{array}{cc}
L_{g\xi}\\L_{gv}
\end{array}\right]$ as long as the number of episodes increases.
\end{thm}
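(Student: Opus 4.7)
The plan is a standard discrete-time Lyapunov argument for an affine linear recursion. First I would stack (20a)--(20b) into vector form
$$L^{(k)} = A\,L^{(k-1)} + L_0, \qquad L^{(k)} = \begin{pmatrix} L_{\xi}^{(k)} \\ L_{v}^{(k)}\end{pmatrix}, \quad L_0 = \begin{pmatrix} L_{g\xi}\\ L_{gv}\end{pmatrix},$$
with $A = \gamma M$, where $M$ is the $2\times 2$ matrix appearing inside (37). Because (37) is a strict Stein-type inequality with $P$ positive definite, the spectral radius of $A$ is strictly less than one, so $I-A$ is invertible and the affine recursion admits the unique fixed point $L^\star = (I-A)^{-1}L_0$.

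Next I would shift to the error coordinate $e^{(k)} := L^{(k)} - L^\star$, which satisfies the homogeneous recursion $e^{(k)} = A\,e^{(k-1)}$, and adopt the quadratic Lyapunov candidate $V(e) = e^{T} P e$, positive definite by hypothesis. Its one-step increment along trajectories factors as
$$V(e^{(k)}) - V(e^{(k-1)}) = e^{(k-1)\,T}\bigl(\gamma^{2} M^{T} P M - P\bigr)\,e^{(k-1)},$$
which is strictly negative for every nonzero $e^{(k-1)}$ precisely by (37). The classical discrete-time Lyapunov theorem from \cite{LaSalle1986} then delivers $e^{(k)}\to 0$, hence $L^{(k)}\to L^\star$ as $k\to\infty$. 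Feeding this back into the bound of Theorem \ref{THM5} shows that the precision $\varepsilon^{(k)} = L_{\xi}^{(k)}\eta + L_{v}^{(k)}\mu$ converges to $L_{\xi}^{\star}\eta + L_{v}^{\star}\mu$.

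To identify the announced limit $L_0$, I would expand $(I-A)^{-1} = I + A + A^2 + \cdots$, giving $L^\star = L_0 + A L_0 + A^2 L_0 + \cdots$, so that $L_0$ is the dominant term and $L^\star \to L_0$ whenever $\|A\| = \|\gamma M\|$ is small; this matches the theorem's qualitative message that with a suitably chosen discount factor the asymptotic precision is governed by the reward-side constants $L_{g\xi}\eta + L_{gv}\mu$. The main technical subtlety I anticipate is not the Lyapunov step itself, which is routine, but verifying that the matrix written inside (37) is exactly the coefficient matrix produced by vectorising (20); should $L_{fv}\neq L_{f\xi}$, the LMI must be read with the true $A = \gamma M$ arising from (20), and a short remark in the final write-up would either adjust (37) or specialise to the regime in which the two coincide.
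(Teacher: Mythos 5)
Your core Lyapunov step is the same as the paper's: the paper also takes $V(k)=\bigl[L_{\xi}^{(k)},L_{v}^{(k)}\bigr]^{T}P\bigl[L_{\xi}^{(k)},L_{v}^{(k)}\bigr]$ and shows that \eqref{Eq37} makes the one-step difference negative definite. What you do differently — and better — is to treat \eqref{Eq20} honestly as an \emph{affine} recursion $L^{(k)}=AL^{(k-1)}+L_{0}$: you first establish $\rho(A)<1$ from the Stein inequality, locate the unique fixed point $L^{\star}=(I-A)^{-1}L_{0}$, and only then run the Lyapunov argument on the error $e^{(k)}=L^{(k)}-L^{\star}$, for which the recursion really is homogeneous. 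The paper instead computes the forward difference ``around the zero equilibrium,'' i.e.\ it silently drops the constant term $L_{0}$; zero is not an equilibrium of \eqref{Eq20}, so the paper's displayed identity for $V(k+1)-V(k)$ does not actually hold along trajectories of \eqref{Eq20}. Your version also exposes that the true limit is $(I-A)^{-1}L_{0}=L_{0}+AL_{0}+\cdots$ rather than $L_{0}$ itself, so the theorem's claim that the precision tends to $L_{0}$ is only an approximation valid when $\|A\|$ is small — a point the paper glosses over. Finally, your closing caveat is well taken and identifies a genuine inconsistency: vectorising \eqref{Eq20a}--\eqref{Eq20b} gives the coefficient matrix
\begin{equation*}
A=\gamma\begin{pmatrix} L_{f\xi} & L_{\mathcal A}L_{f\xi} \\ L_{fv} & L_{\mathcal A}L_{fv} \end{pmatrix},
\end{equation*}
which does not coincide with the matrix appearing in \eqref{Eq37} unless $L_{fv}=L_{f\xi}=1$; the LMI should be stated with the true $A$. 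In short, your proposal reaches the same conclusion by the same Lyapunov route but repairs the handling of the affine term, identifies the correct limit, and flags a mismatch in the hypothesis that the paper's own proof inherits.
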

\begin{pf}
Consider the following Lyapunov function $V(k)=\left[\begin{array}{c}
L_{\xi}^{(k)} \\ L_{v}^{(k)}
\end{array}\right]^{T}P\left[\begin{array}{c}
L_{\xi}^{(k)} \\ L_{v}^{(k)}
\end{array}\right]$. Note that $V(L_{\xi}=0, L_{v}=0)=0$ and because $P$ is positive definite, the function $ V $ is positive. Now, take its forward difference along the trajectory of system \eqref{Eq20} around the zero equilibrium, it follows that,
$$\begin{aligned}
V(k+1)-V(k)=&\left[\begin{array}{c}
L_{\xi}^{(k)} \\ L_{v}^{(k)}
\end{array}\right]^{T}\!\!\left(\gamma^{2}\left(\begin{array}{cc}
L_{f\xi}     &    L_{\mathcal A}\\
L_{f\xi}     &   L_{\mathcal A}
\end{array}\right)^{T}P\right. \nonumber\\
&\left. \times\left(\begin{array}{cc}
L_{f\xi}     &    L_{\mathcal A} \\
L_{f\xi}     &   L_{\mathcal A}
\end{array}\right)-P\right)\left[\begin{array}{c}
L_{\xi}^{(k)} \\ L_{v}^{(k)}
\end{array}\right].
\end{aligned} $$
If condition \eqref{Eq37} is satisfied, then $\Delta V<0$, which implies asymptotic stability of the system \eqref{Eq20}. This completes the proof. \hfill$\square$
\end{pf}
\begin{rem}
Given an arbitrary discount factor $\gamma\in \left(0,1\right) $, inequality \eqref{Eq37} becomes a linear matrix inequality and thus it can be solved using standard semidefinite programming (SDP) solvers, \cite{boyd1994linear}.
\end{rem}
%By meeting these conditions, the proposed Q-learning method guarantees the ISS property, resulting in progressively tighter bounds $\underline{q}_{\pi}$ and $\overline{q}_{\pi}$ as the number of iterations increases.

\section{Controller refinement and Q-learning for symbolic model}	
Q-learning is an off-policy reinforcement learning algorithm, offering the advantage of evaluating and comparing maximally and minimally optimal policies in an offline manner. This flexibility allows the designer to select the policy that best suits the control problem's requirements.

The proposed Q-learning algorithm is shown in Algorithm \ref{alg:q_learning_symbolic}. Once the learning is established, any of the Q-tables $\underline{q}$ and $\overline{q}$ can be used to refine the learned strategy into a controller for the original system $\Sigma$ in (\ref{Eq1}). Formally, a controller for the original discrete-time system $\Sigma$ in (\ref{Eq1}) is a map $\pi:\mathcal{S} \rightarrow \mathcal{A}$ satisfying $\pi(\xi) \in \mathcal{A}(\xi)$ for all $\xi \in \mathcal{S}$. Now given the results of the $Q$ tables $\underline{q}$ and $\overline{q}$, two control strategies are generated:
\begin{itemize}
    \item A lower control strategy $\underline{\pi}:\mathcal{S} \rightarrow \mathcal{A}$ defined formally as follows: for $\xi \in \mathcal{S}$ such that $\xi \in s$, $s\in \mathcal{S}_{\mathrm{D}}$, we have $\underline{\pi}(\xi)\in\arg \max\limits_{a} \underline{q}(s,a)$;
    \item  An upper control strategy $\overline{\pi}:\mathcal{S} \rightarrow \mathcal{A}$ defined formally as follows: for $\xi \in \mathcal{S}$ such that $\xi \in s$, $s\in \mathcal{S}_D$, we have $\overline{\pi}(\xi)\in\arg \max\limits_{a} \overline{q}(s,a)$;
\end{itemize}

One may be in the case where there is a disagreement between the policies generated by $\underline{q}$ and $\overline{q}$, if they disagree, one can refine either using $\underline{q}$ or $\overline{q}$ while preserving $\varepsilon$-approximate optimality, ensuring (\ref{Eq35}). Remark 5 in the Appendix 2 highlight other differences between algorithm \ref{alg:q_learning_symbolic} and the algorithm in \cite{borri2023reinforcement}.
\setcounter{algorithm}{2}
\begin{algorithm}
	\caption{Q-Learning from symbolic models}
	\label{alg:q_learning_symbolic}
	\begin{algorithmic}[1]
		\STATE \textbf{Input:} Environment with continuous state space $\mathcal{S}$, continuous action space $\mathcal{A}$. Define the Lipshitz constant $L_{f}$ and $L_{g}$, the learning rate $\alpha$ and the discount factor $\gamma$. Choose an arbitrary Q-value precision $\varepsilon$. Initialize $\underline{q}$ and $\overline{q}$ satisfying Assumption \ref{Assum4}.
		\STATE \textbf{Output:} Learned $\underline{q}$ and $\overline{q}$-values for all symbolic states and action samples.
		\STATE Compute the discretisation parameters $\eta$, $\mu$ based on $\varepsilon$ and \eqref{Eq35}.
            \STATE Derive the symbolic model $\Sigma_{\mathrm{D}}$.
		\FOR{each episode}
		\STATE Initialize state $s$
		\WHILE{episode not finished}
		\STATE Choose action $a \in \mathcal{A}_{q}$ using an exploration or exploitation strategy (e.g., epsilon-greedy)
		\STATE Execute action $a$ and observe all possible successors $s^{\prime} \in \Delta(s,a)$ and rewards $\underline{g}(s,a)$,  $\overline{g}(s,a)$.
		\STATE Update the maximally and minimally Q-values using the Q-learning update rule:\\
		\STATE $\underline{q}(s, a) \leftarrow \underline{q}(s, a) + \alpha(\underline{g} + \gamma\min\limits_{s^{\prime} \in \Delta(s,a)}\max\limits_{a^{\prime}\in \mathcal{A}_{q}(s^{\prime})} \underline{q}(s', a'))$
		\STATE $\overline{q}(s, a) \leftarrow \overline{q}(s, a) + \alpha(\underline{g} + \gamma\max\limits_{s^{\prime} \in \Delta(s,a)}\max\limits_{a'\in \mathcal{A}_{q}(s')} \overline{q}(s', a'))$
		\STATE Update current state using any state that belongs to  $\Delta(s,a)$
		\ENDWHILE
		\ENDFOR
	\end{algorithmic}
 % \vspace{-5pt}
\end{algorithm}

\section{Case study}
We test our proposed method in two different control
problems, the monttain car example and the van der pol oscillator.
\subsection{Mountain Car Control problem}
Consider $\Sigma$ as the discrete-time nonlinear system of the Mountain Car as described in \cite{Moore90efficientmemory-based}:
\begin{align}\label{DynamicMontainCar}
\xi_{k+1} =A\xi_{k}+ \Phi(\xi_{k}) + B v_{k}  
\end{align}
where the matrices $A, B$, and the non-linearity $\Phi_{k}(\xi)$ are given by, $$
\begin{aligned}
&A= \left(\begin{array}{cc}
1 & 1\\
0 & 1
\end{array}\right), \; B=\left(\begin{array}{c}
0\\
0.001
\end{array}\right)\\
&\Phi(\xi_{k})=\left(\begin{array}{c}
0\\
- 0.0025 \cos(3\xi_{1}(k))
\end{array}\right)
\end{aligned}$$
The Mountain Car state, $\xi(k)= \left(\begin{array}{cc}
\xi_{1}^{T}(k) & \xi_{2}^{T} (k)
\end{array}\right)^{T}$, is defined by $\xi_1 \in [\xi_{1,\min}, \xi_{1,\max}]$ the position of the car, and $\xi_2 \in [\xi_{2,\min}, \xi_{2,\max}]$ its velocity.  The specific values for the minimum and maximum ranges are, $\xi_{1,\min} = -1.2$;  $\xi_{1,\max} = 0.6$; $\xi_{2,\min} = -0.07$; $\xi_{2,\max} = 0.07$.  The control input is denoted by $v \in \mathcal{A} = [-1, 1]$, where $\mathcal{A}$ represents the continuous action space. 

The Mountain Car problem is a deterministic difference equation that involves a car placed at the bottom of a sinusoidal valley. The car has the ability to apply accelerations $v_{k}$ in the set $\mathcal{A}$. The objective is to strategically accelerate the car in order to reach the goal state situated on top of the hill defined by the position $\xi_{1}^{\ast}=\xi_{1,\max}=0.6$. To achieve this goal, the model in gym environment, \cite{Moore90efficientmemory-based}, assigns a reward of $g(\xi,v)=0$ to the state $\xi_{1}^{\ast}$, indicating a successful achievement of the goal state and a reward of $g(\xi,v)=-1 $ is assigned to all other states in the environment, representing unsuccessful or intermediate states. 

As the state and action spaces are in a continuous domain, it is impossible to apply the standard Q-learning approach directly. Following the proposed space discretization method, define $\eta_{\xi}$ as the number of discrete states for each $\xi_{i}$, and consider $\eta_{i}=\frac{\xi_{i,max}-\xi_{i,min}}{\eta_{\xi}}$, \(i=1,2\)  and $ \mu=1$, a symbolic model conforming to \eqref{Sigma_q} is constructed. The construction of a symbolic model of $\Sigma$ involves the computation of the over-approximation of attainable sets for $s\in \mathcal{S}_{\mathrm{D}}$. Thus, to compute the local Lipschitz constant over a discrete state, take two states $\xi, \overline{\xi} \in s $ under different control inputs $v_{k},  \in \mathcal{A}(\xi), $ $ \overline{v}_{k} \in \mathcal{A}(\overline{\xi})$, then, it yields,
\begin{align*}
\left|\xi_{k+1}-\overline{\xi}_{k+1}\right| =& \resizebox{0.6\hsize}{!}{$\left|A (\xi_{k}-\overline{\xi}_{k})+\left(\begin{array}{c}
0\\
-0.0025\left(cos(3\xi_{1}(k))-cos(3\overline{\xi}_{1}(k))\right)
\end{array}\right) + B (v-\overline{v})\right|$}\\
\leqslant& \resizebox{0.6\hsize}{!}{$\left|A (\xi_{k}-\overline{\xi}_{k})+\left(\begin{array}{c}
0\\
-0.0025
\end{array}\right)\right|+B\left|(v_{k}-\overline{v}_{k})\right|.$}
\end{align*}
Hence, the local Lipschitz constants for system \eqref{DynamicMontainCar} are over-approximated by, $L_{f \xi}= \|A\|+0.0025=1.0025$ and $L_{f v } = \|B\|=0.001$. %For the abstraction purpose and because $\xi_{1}$ and $\xi_{2}$ belongs to different ranges, the tighter low bound of the Lipschitz constant over each difference equation in \eqref{DynamicMontainCar} is analysed separately. For the first difference equation, 
%	\begin{align*}
%	\left|\xi_{1}(k+1)-\overline{\xi}_{1}(k+1)\right| \leqslant& L_{f \xi_{1}} \left|\xi_{1}(k)-\overline{\xi}_{1}(k)\right|
%	\end{align*}
%with $L_{f \xi_{1}}=1$. For the second equation, it yields,
%	\begin{align*}
%	\left|\xi_{2}(k+1)-\overline{\xi}_{2}(k+1)\right| \leqslant&L_{f \xi_{2}}  \|\xi_{2}(k)-\overline{\xi}_{2}(k)\|
%\end{align*}
%with $L_{f \xi_{2}}=1+0.0025$. 
Now, let $s^{\ast} \in \mathcal{S}_{\mathrm{D}}$ denotes the discrete state to which the original goal position belongs, which is the position of the car at the top of the hill. To represent the goal state, the maximal and minimal reward functions are assigned as follows: $\overline{g}(s^{\ast})=0$,  $\underline{g}(s^{\ast})=0$ and a reward of $\underline{g}(s)=\overline{g}(s)=-1, \; \forall s \in \mathcal{S}_{\mathrm{D}}\backslash\left\lbrace s^{\ast} \right\rbrace$. To demonstrate the effectiveness of the proposed approach, three experiments are conducted. The first experiment aims to showcase the response of the obtained policies using the Q-learning based symbolic controller. The second experiment focuses on quantifying the achieved precision under different discretization parameters. Lastly, the third experiment investigates the convergence of the maximally and minimally optimal policies towards a unique optimal policy as the discretization parameters become finer. Detailed descriptions of each experiment are provided below.

\paragraph{Experiment 1}
For $n_{\xi} = 160, \;\alpha=0.4, \; \gamma=0.99 $ and an exploration probability $0.4$, Algorithm \ref{alg:q_learning_symbolic} is utilized to compute the maximally and minimally optimal policies for the mountain car example. The optimal maximally and minimally policies are plotted in Figures \ref{fig:ExtractedOptimalPolicyQmin}-\ref{fig:ExtractedOptimalPolicyQmax}. According to Section (controller refinement), two controllers were refined from both $\underline{q}_{\pi}^{\ast}$ and $\overline{q}_{\pi}^{\ast}$. The trajectory of the car under each policy is depicted in Figures \ref{fig:CarTrajectoryMin}-\ref{fig:CarTrajectoryMax}. This experiment demonstrates that both the maximally and minimally optimal policies enable the car to successfully reach the control goal, which is to balance the car's acceleration and reach the top of the hill.

\begin{figure}
\centering
\includegraphics[width=1\linewidth]{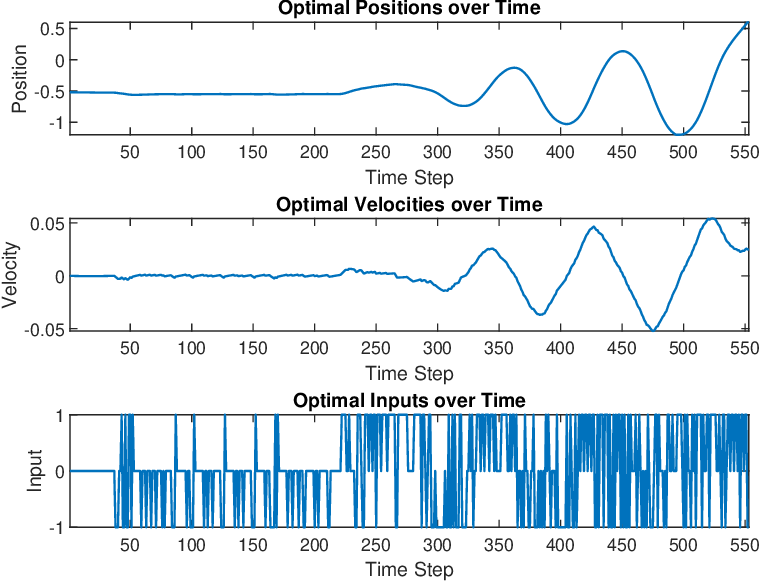}
\caption{Car trajectory obtained using the refined optimal policy derived from the minimal Q-values using Algorithm \ref{alg:q_learning_symbolic}.}
\label{fig:CarTrajectoryMin}
\end{figure}
\begin{figure}
\centering
\includegraphics[width=1\linewidth]{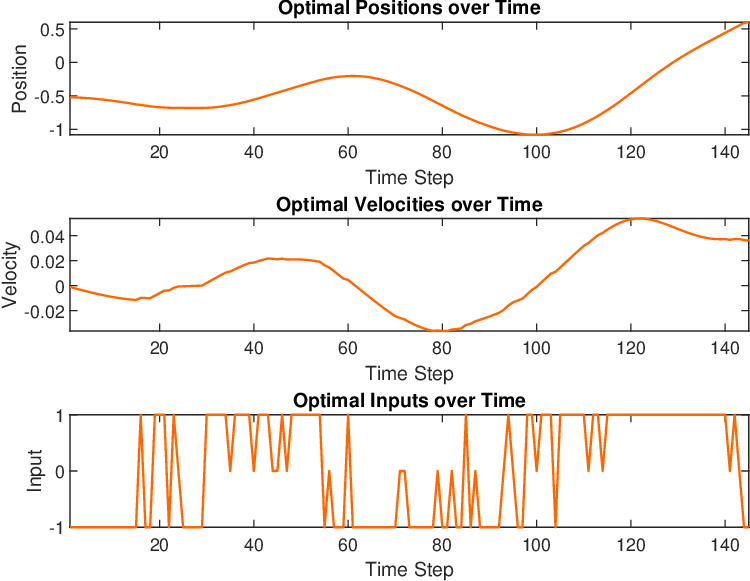}
\caption{Car trajectory obtained using the refined optimal policy derived from the maximal Q-values using Algorithm \ref{alg:q_learning_symbolic}.}
\label{fig:CarTrajectoryMax}
\end{figure}

%%%%%%%%%%%%%%%%%%%%%%%%%%%%%%%%%%%%%%%%%%%%%%
%%%%%%%%%%%% optimal policies %%%%%%%%%%%%%%%%
%%%%%%%%%%%%%%%%%%%%%%%%%%%%%%%%%%%%%%%%%%%%%%

\begin{figure}
\centering
\includegraphics[width=1\linewidth]{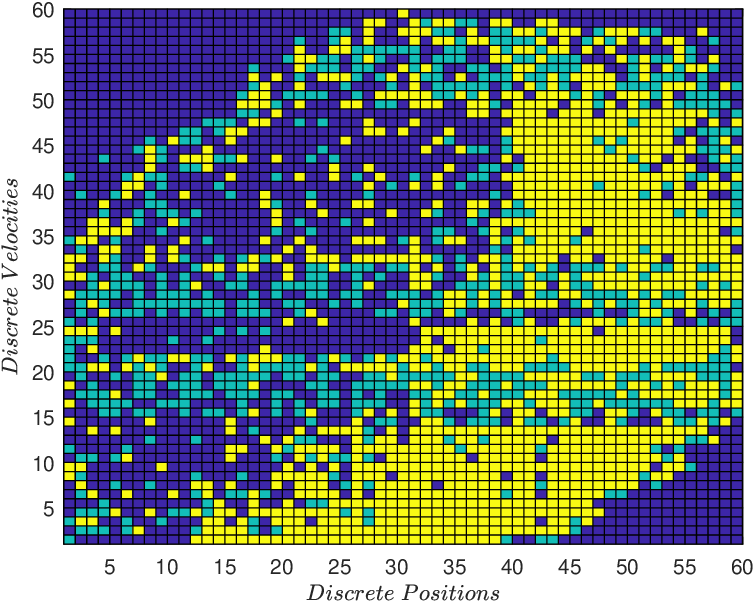}
\caption{Extracted optimal policy $\overline{q}_{\pi}$.}
\label{fig:ExtractedOptimalPolicyQmin}
\end{figure}
\begin{figure}
\centering
\includegraphics[width=1\linewidth]{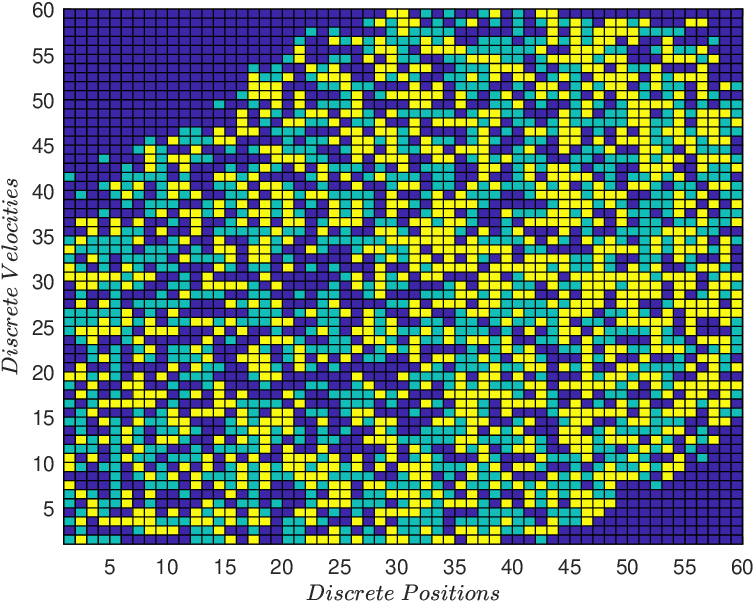}
\caption{Extracted optimal policy $\underline{q}_{\pi}$.}\label{fig:ExtractedOptimalPolicyQmax}
\end{figure}

\paragraph{Experiment 2}
The following experiments compute the optimal policies for different values of $n_{\xi}$, namely $n_{\xi}=40, 80, 160$ and different symbolic inputs $n_{v}=3, 6, 12$. The learning parameters $\alpha=0.4$, $\gamma=0.5$, and an exploration probability of $0.4$ are used. Table \ref{tab:1} shows the measured distance $\max\limits_{s\in \mathcal{S}_{\mathrm{D}}} \max\limits_{a\in \mathcal{A}_{\mathrm{D}}}\left|\underline{q}_{\pi}(s,a)-\overline{q}(s,a)\right|$ experimentally. The primary outcome is that the distance $\max\limits_{s\in \mathcal{S}_{\mathrm{D}}} \max\limits_{a\in \mathcal{A}_{\mathrm{D}}}\left|\underline{q}_{\pi}(s,a)-\overline{q}(s,a)\right|$ is bounded, confirming the theoretical result given in Theorem \ref{THM5}. The second observation is that by reducing the discretization parameters, we can achieve tighter bounds, leading to minimal and maximal Q-values that closely resemble those of continuous state-action spaces. This indicates that the refined Q-values exhibit a higher level of accuracy, approximating the Q-values in the continuous setting.  

% The objective is to demonstrate that by increasing the number of symbolic states $n_{\xi}$ the bounds $\left|\underline{q}_{\pi}^{\ast}-\overline{q}_{\pi}^{\ast}\right|$ tend to be tighter and the distance between both is experimentally measured in Table \ref{tab:1}. This shows that it is possible

\begin{table}
\caption{Precision ($\varepsilon$) for Different Space Discretization: $\eta_{\xi}$ (number of discrete states) and $n_{v}$ (number of symbolic inputs).}
\label{tab:1}
\centering
\begin{tabular}{|c|c|c|c|}
\hline
\diagbox[width=3cm, height=2cm]{\\ $n_{v}$}{ \\ \hspace{8pt} $n_{\xi}$ } & 40 & 80 & 160 \\
\hline
3 & 2 & 1.11 & 1.27 \\
\hline
6 & 1.99 & 1.8295 & 1.1424 \\
\hline
12 & 1.98 & 1.18 & 0.8558\\
\hline
\end{tabular}
\end{table}

\paragraph{Experiment 3}
This simulation compares the similarity between the maximally and minimally optimal policies obtained for different discretization parameters $n_{\xi}=40, 60, 80, 100, 120, 140, 160$ and $n_{v}=3$. The non-similarity ratio, denoted as $\rho$, is defined as the number of non-similar state-action pairs divided by $\eta_{\xi}^{2}$, i.e., $$ \rho= \frac{\text{Number of Nonsimilar state action pairs}}{\eta_{\xi}^{2}}. $$
A "Nonsimilar state-action pair" is formally defined by $\overline{a}(s)\neq \underline{a}(s)$ when considering a state $s \in \mathcal{S}_{\mathcal{D}}$. Here, $\underline{a}(s), \overline{a}(s)$ represents the action to be performed in state $s$ using the minimal and maximal policies, respectively. This definition captures situations where  $\overline{a}(s)$ and $\underline{a}(s)$ for a given state have different action values, indicating the potential for different actions to be chosen. In fact, a Nonsimilar state-action pair highlights cases where the action values  $\overline{a}(s)$ and $\underline{a}(s)$ suggest different optimal actions for a given state. This distinction becomes important in understanding the variability and possible divergence in action choices under both policies.
Figures \ref{fig:ExtractedOptimalPolicyQmax} and \ref{fig:ExtractedOptimalPolicyQmin} illustrate the refined optimal policies for the maximal and minimal Q-values, respectively. In these figures, the blue color represents the action $"-1"$, the blue sky color represents the action $"0"$, and the yellow color represents the action $"1"$. Remarkably, both optimal policies exhibit similar patterns, capturing similar state-action pairs despite their differences in the maximum and minimum Q-values. This means that the refined optimal policies retain essential characteristics from the continuous state-action space, with tighter bounds achieved by reducing the discretization parameters.
Figure \ref{fig:Tightness} illustrates the relationship between the ratio $\rho$ and the level of discretization. It can be observed that as the discretization becomes finer, the value of $\rho$ decreases. This indicates that as the intervals in the state space become smaller, the bounds $\left|\underline{q}_{\pi}^{\ast}-\overline{q}_{\pi}^{\ast}\right|$ become tighter, indicating an increased level of precision in estimating the optimal policy of the continuous state-action space.
\begin{figure}
\centering
\includegraphics[width=1\linewidth]{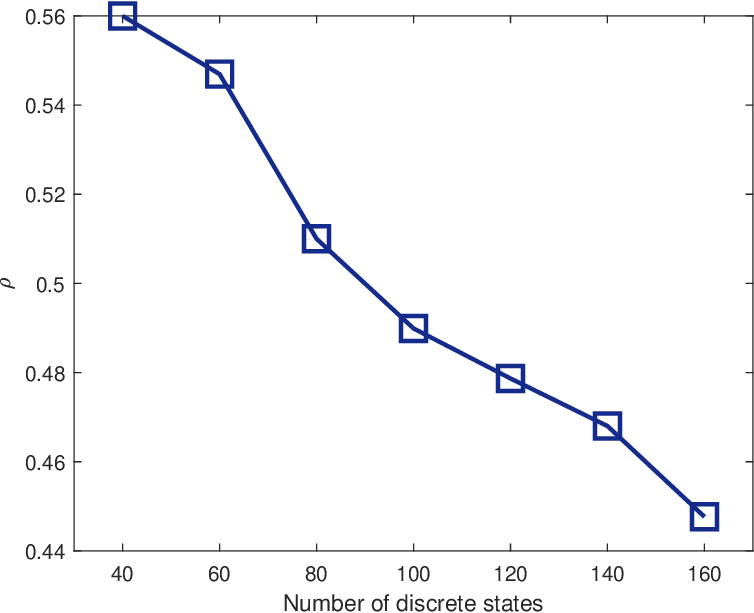}
\caption{Similarity between the extracted optimal policies vs Tightness of $ \left|\underline{q}_{\pi}^{\ast}-\overline{q}_{\pi}^{\ast}\right|$ for different values of $\eta_{\xi}$.}
\label{fig:Tightness}
\end{figure}

\subsection{Van der Pol Oscillator}
{Consider the following variant of the Van Der Pol Oscillator, which incorporates an external control input \(u_{k}\)  acting on the velocity \(v\):
\vspace{-8pt}
	\begin{align}\label{Ar1}
		x_{k+1} &= x_{k} + \tau v_{k}\\
		v_{k+1} &= v_{k} + \tau (\zeta \cdot (1 - x_{k}^2) \cdot v - x + u_{k})
	\end{align}
	Here, \(x\) denotes the position, and we have the following setup parameters: \(\tau = 0.01\), \(x_k \in [-2, 2]\) (range of \(x_k\) for the position), \(v_k \in [-3, 3]\) (range of \(v_k\) for the velocity), and \(u_k \in \{-1,0,1\}\) (action space). The parameter $\zeta$ is set $\zeta=2$. The desired position and velocity, denoted as \(x_{desired}\) and \(v_{desired}\), are both set to zero, indicating that the goal is to stabilize the system at the origin. To achieve this objective, a reward function is constructed \(g(x, v) = -((x - x_{desired})^2 + (v - v_{desired})^2)\), which penalizes the deviation of the system's position and velocity from the desired state. Here, once the run reaches the state representing the origin the simulation is stopped. To apply standard Q-learning \cite{watkins1989learning}, a space discretization method is required for both the position and velocity spaces. Apply the proposed abstraction with $\eta=\frac{1}{20}$ and $\mu = \frac{1}{20}$. Choosing the learning rate $\alpha=0.5$ and a discount factor $\gamma=0.9$, Algorithm \ref{alg:q_learning_symbolic} is performed. }
 Figure \ref{fig:x2} compares the trajectories of the system using the Q-learning based uniform discretisation in Algorithm \ref{alg:q_learning_discretisation} and the proposed Q-learning based symbolic abstraction. Within the above simulation setup, the trajectory of the system with uniform discretisation exhibits oscillatory behavior and does not converge or passes through the origin. On the other hand show the controlled closed-loop system using the proposed approach. Indeed, it is shown that the trajectory enters the red circle, indicating that the associated controller has achieved the control goal.
Q learning in Algorithm 

\begin{figure}[!h]
\vspace{-10pt}
    \begin{minipage}[t]{0.48\linewidth}
        \includegraphics[width=1\linewidth]{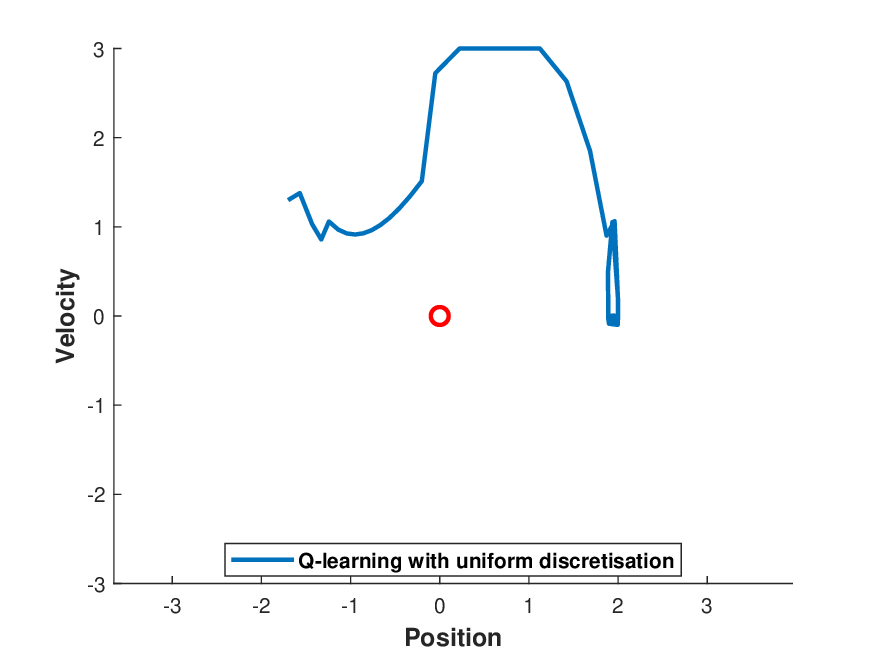}
        \subcaption{System trajectory with uniform discretisation.}
        \label{Fig:OpenLoop}
    \end{minipage}
    \hspace{0.1cm}\begin{minipage}[t]{0.48\linewidth} 
        \includegraphics[width=1\linewidth]{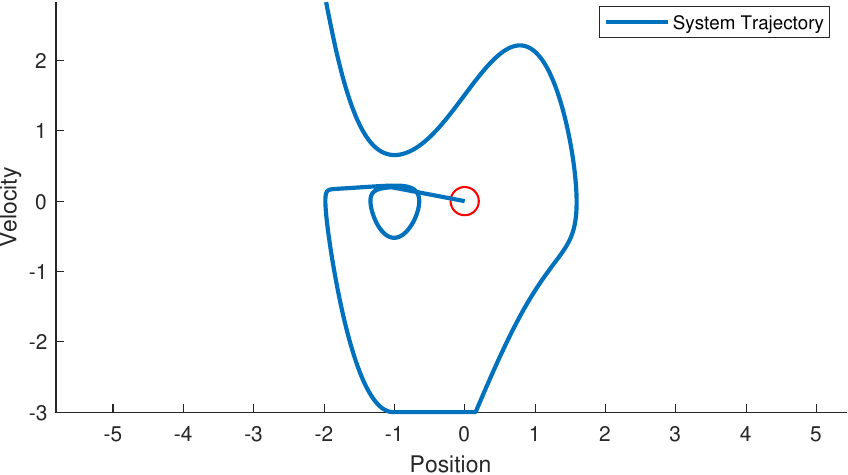}
        \subcaption{System trajectory with Q-learning based symbolic abstraction.}
        \label{Fig:ClosedLoop}
    \end{minipage}      
    \caption{Trajectories Van Der Pol oscillator using Algorithms \ref{alg:q_learning_discretisation} and \ref{alg:q_learning_symbolic}.}
    \label{fig:x2}  
    
\vspace{-15pt}
\end{figure}

\section{Conclusion}
This work has focused on addressing the challenge of applying Q-learning to control problems with continuous state and action spaces. Based on the fact that space discretisation methods, i.e., uniform and Voronoi are not able to capture systems trajectories due to the reachable induced mismatch. This work proposes symbolic discretization-based approach, which enables to capture systems trajectories through the framework of alternating simulation relation from the abstraction to the concrete system. Such abstraction takes the form of  a non-deterministic finite transition systems, which is not featured with transition probabilities. By extending and generalizing previous work in \cite{powell2007approximate,sutton2018reinforcement,borri2023reinforcement},Two Q-tables, namely the minimal and maximal Q-tables, are learned which effectively bound the Q-values of the continuous state-action space. Theoretical insights have been provided, establishing a relation between the tightness of the Q-value bounds and the abstraction parameters. Analysis has shown that reducing the distance between quantizer levels leads to tighter Q-values, ultimately resulting in a convergence of the extracted policy towards the optimal policy.	Moreover, through the evaluation of our algorithm on the mountain car control problem, the convergence of the Q-learning process and the achieved precision in approximating the optimal Q-values is demonstrated. The proposed approach opens up new possibilities for addressing reachable-induced mismatch and improving the performance of reinforcement learning algorithms in continuous domains. Future research can further explore the applicability of this approach to the class of nonlinear systems continuous time and state action spaces.

\bibliographystyle{ieeetr}
\bibliography{IEEEbib}

\begin{thebibliography}{10}

\bibitem{watkins1989learning}
C.~J. C.~H. Watkins, ``Learning from delayed rewards,'' 1989.

\bibitem{mnih2015human}
V.~Mnih, K.~Kavukcuoglu, D.~Silver, A.~A. Rusu, J.~Veness, M.~G. Bellemare,
  A.~Graves, M.~Riedmiller, A.~K. Fidjeland, G.~Ostrovski, {\em et~al.},
  ``Human-level control through deep reinforcement learning,'' {\em nature},
  vol.~518, no.~7540, pp.~529--533, 2015.

\bibitem{mnih2013playing}
V.~Mnih, K.~Kavukcuoglu, D.~Silver, A.~Graves, I.~Antonoglou, D.~Wierstra, and
  M.~Riedmiller, ``Playing atari with deep reinforcement learning,'' {\em arXiv
  preprint arXiv:1312.5602}, 2013.

\bibitem{gu2016continuous}
S.~Gu, T.~Lillicrap, I.~Sutskever, and S.~Levine, ``Continuous deep q-learning
  with model-based acceleration,'' in {\em International conference on machine
  learning}, pp.~2829--2838, PMLR, 2016.

\bibitem{van2020q}
T.~Van~de Wiele, D.~Warde-Farley, A.~Mnih, and V.~Mnih, ``Q-learning in
  enormous action spaces via amortized approximate maximization,'' {\em arXiv
  preprint arXiv:2001.08116}, 2020.

\bibitem{sutton2018reinforcement}
R.~S. Sutton and A.~G. Barto, {\em Reinforcement learning: An introduction}.
\newblock MIT press, 2018.

\bibitem{seyde2022solving}
T.~Seyde, P.~Werner, W.~Schwarting, I.~Gilitschenski, M.~Riedmiller, D.~Rus,
  and M.~Wulfmeier, ``Solving continuous control via q-learning,'' {\em arXiv
  preprint arXiv:2210.12566}, 2022.

\bibitem{powell2007approximate}
W.~B. Powell, {\em Approximate Dynamic Programming: Solving the curses of
  dimensionality}, vol.~703.
\newblock John Wiley \& Sons, 2007.

\bibitem{lampton2009multiresolution}
A.~Lampton and J.~Valasek, ``Multiresolution state-space discretization method
  for q-learning with function approximation and policy iteration,'' in {\em
  2009 IEEE International Conference on Systems, Man and Cybernetics},
  pp.~2677--2682, IEEE, 2009.

\bibitem{lampton2010multi}
A.~Lampton, J.~Valasek, and M.~Kumar, ``Multi-resolution state-space
  discretization for q-learning with pseudo-randomized discretization,'' in
  {\em The 2010 International Joint Conference on Neural Networks (IJCNN)},
  pp.~1--8, IEEE, 2010.

\bibitem{hoerger2022adaptive}
M.~Hoerger, H.~Kurniawati, D.~Kroese, and N.~Ye, ``Adaptive discretization
  using voronoi trees for continuous-action pomdps,'' in {\em Algorithmic
  Foundations of Robotics XV: Proceedings of the Fifteenth Workshop on the
  Algorithmic Foundations of Robotics}, pp.~170--187, Springer, 2022.

\bibitem{tabuada2009verification}
P.~Tabuada, {\em Verification and control of hybrid systems: a symbolic
  approach}.
\newblock Springer Science \& Business Media, 2009.

\bibitem{borri2023reinforcement}
A.~Borri and C.~Possieri, ``Reinforcement learning for non-deterministic
  transition systems with an application to symbolic control,'' {\em IEEE
  Control Systems Letters}, 2023.

\bibitem{Moore90efficientmemory}
A.~W. Moore, ``Efficient memory-based learning for robot control,'' tech. rep.,
  University of Cambridge, 1990.

\bibitem{bentivegna2003learning}
D.~C. Bentivegna, C.~G. Atkeson, and G.~Cheng, ``Learning to select primitives
  and generate sub-goals from practice,'' in {\em Proceedings 2003 IEEE/RSJ
  International Conference on Intelligent Robots and Systems (IROS 2003)(Cat.
  No. 03CH37453)}, vol.~1, pp.~946--953, IEEE, 2003.

\bibitem{csimcsek2005identifying}
{\"O}.~{\c{S}}im{\c{s}}ek, A.~P. Wolfe, and A.~G. Barto, ``Identifying useful
  subgoals in reinforcement learning by local graph partitioning,'' in {\em
  Proceedings of the 22nd international conference on Machine learning},
  pp.~816--823, 2005.

\bibitem{Clausen2000}
C.~Clausen and H.~Wechsler, ``Quad-q-learning,'' {\em IEEE Transactions on
  Neural Networks}, vol.~11, no.~2, pp.~279--294, 2000.

\bibitem{dietterich1998maxq}
T.~G. Dietterich {\em et~al.}, ``The maxq method for hierarchical reinforcement
  learning.,'' in {\em ICML}, vol.~98, pp.~118--126, 1998.

\bibitem{sinclair2020adaptive}
S.~Sinclair, T.~Wang, G.~Jain, S.~Banerjee, and C.~Yu, ``Adaptive
  discretization for model-based reinforcement learning,'' {\em Advances in
  Neural Information Processing Systems}, vol.~33, pp.~3858--3871, 2020.

\bibitem{sinclair2022adaptive}
S.~R. Sinclair, S.~Banerjee, and C.~L. Yu, ``Adaptive discretization in online
  reinforcement learning,'' {\em Operations Research}, 2022.

\bibitem{delimpaltadakis2023interval}
G.~Delimpaltadakis, M.~Lahijanian, M.~Mazo~Jr, and L.~Laurenti, ``Interval
  markov decision processes with continuous action-spaces,'' in {\em
  Proceedings of the 26th ACM International Conference on Hybrid Systems:
  Computation and Control}, pp.~1--10, 2023.

\bibitem{buadicua2022experiments}
A.~B{\u{a}}dic{\u{a}}, C.~B{\u{a}}dic{\u{a}}, M.~Ivanovi{\'c}, and
  D.~Logof{\u{a}}tu, ``Experiments with solving mountain car problem using
  state discretization and q-learning,'' in {\em Intelligent Information and
  Database Systems: 14th Asian Conference, ACIIDS 2022, Ho Chi Minh City,
  Vietnam, November 28--30, 2022, Proceedings, Part I}, pp.~142--155, Springer,
  2022.

\bibitem{jiang1999convergence}
G.~Jiang, H.-Q. Gao, C.-P. Wu, and G.~Cybenko, ``Convergence analysis of
  discretization procedure in q-learning,'' {\em IFAC Proceedings Volumes},
  vol.~32, no.~2, pp.~8551--8556, 1999.

\bibitem{pola2016symbolic}
G.~Pola, A.~Borri, and M.~D. Di~Benedetto, ``On symbolic control design of
  discrete-time nonlinear systems with state quantized measurements,'' in {\em
  2016 IEEE 55th Conference on Decision and Control (CDC)}, pp.~6571--6576,
  IEEE, 2016.

\bibitem{sontag1998mathematical}
E.~D. Sontag, {\em Mathematical control theory: deterministic finite
  dimensional systems}, vol.~6.
\newblock Springer Science \& Business Media, 1998.

\bibitem{khalil2002}
H.~K. Khalil, {\em Nonlinear systems}.
\newblock Prentice Hall, 3rd ed~ed., 2002.

\bibitem{LaSalle1986}
J.~P. LaSalle, {\em The stability and control of discrete processes}.
\newblock Applied Mathematical Sciences, Springer, 1986.

\bibitem{boyd1994linear}
S.~Boyd, L.~El~Ghaoui, E.~Feron, and V.~Balakrishnan, {\em Linear matrix
  inequalities in system and control theory}.
\newblock SIAM, 1994.

\bibitem{Moore90efficientmemory-based}
A.~W. Moore, ``Efficient memory-based learning for robot control,'' tech. rep.,
  University of Cambridge, 1990.

\end{thebibliography}

\end{document}}